\documentclass[lettersize,journal]{IEEEtran}
\usepackage{amsmath,amsfonts}
\usepackage{algorithmic}
\usepackage{algorithm}
\usepackage{array}
\usepackage{textcomp}
\usepackage{stfloats}
\usepackage{url}
\usepackage{verbatim}
\usepackage{graphicx}
\usepackage{cite}
\hyphenation{op-tical net-works semi-conduc-tor IEEE-Xplore}



\usepackage[T1]{fontenc}
\usepackage[utf8]{inputenc}
\usepackage{mathtools}

\usepackage{amssymb,mathrsfs}
\usepackage{amsthm}
\usepackage{bm}
\usepackage{scalerel}
\usepackage{nicefrac}
\usepackage{microtype} 
\usepackage[shortlabels]{enumitem}
\usepackage{graphicx}
\usepackage{epstopdf}
\DeclareGraphicsExtensions{.eps,.png,.jpg,.pdf}

\usepackage{url}
\usepackage{colortbl}
\usepackage{booktabs}
\usepackage{multirow}
\usepackage[normalem]{ulem}
\usepackage{xparse}
\usepackage{calc}
\usepackage{etoolbox}

\makeatletter
\@ifpackageloaded{natbib}{
	\relax
}{
	\usepackage{cite}
}
\makeatother


\usepackage{array}
\newcolumntype{L}[1]{>{\raggedright\let\newline\\\arraybackslash\hspace{0pt}}m{#1}}
\newcolumntype{C}[1]{>{\centering\let\newline\\\arraybackslash\hspace{0pt}}m{#1}}
\newcolumntype{R}[1]{>{\raggedleft\let\newline\\\arraybackslash\hspace{0pt}}m{#1}}

\makeatletter
\let\MYcaption\@makecaption
\makeatother
\usepackage[font=footnotesize]{subcaption}
\makeatletter
\let\@makecaption\MYcaption
\makeatother

\usepackage{glossaries}
\makeatletter
\sfcode`\.1006

\let\oldgls\gls
\let\oldglspl\glspl

\newcommand\fussy@ifnextchar[3]{%
	\let\reserved@d=#1%
	\def\reserved@a{#2}%
	\def\reserved@b{#3}%
	\futurelet\@let@token\fussy@ifnch}
\def\fussy@ifnch{%
	\ifx\@let@token\reserved@d
		\let\reserved@c\reserved@a
	\else
		\let\reserved@c\reserved@b
	\fi
	\reserved@c}

\renewcommand{\gls}[1]{%
\oldgls{#1}\fussy@ifnextchar.{\@checkperiod}{\@}}
\renewcommand{\glspl}[1]{%
\oldglspl{#1}\fussy@ifnextchar.{\@checkperiod}{\@}}

\newcommand{\@checkperiod}[1]{%
	\ifnum\sfcode`\.=\spacefactor\else#1\fi
}

\robustify{\gls}
\robustify{\glspl}
\makeatother

\newacronym{wrt}{w.r.t.}{with respect to}
\newacronym{RHS}{R.H.S.}{right-hand side}
\newacronym{LHS}{L.H.S.}{left-hand side}
\newacronym{iid}{i.i.d.}{independent and identically distributed}
\newacronym{SOTA}{SOTA}{state-of-the-art}

\usepackage{float}

\ifx\notloadhyperref\undefined
	\ifx\loadbibentry\undefined
		\usepackage[hidelinks,hypertexnames=false]{hyperref}
	\else
		\usepackage{bibentry}
		\makeatletter\let\saved@bibitem\@bibitem\makeatother
		\usepackage[hidelinks,hypertexnames=false]{hyperref}
		\makeatletter\let\@bibitem\saved@bibitem\makeatother
	\fi
\else
	\ifx\loadbibentry\undefined
		\relax
	\else
		\usepackage{bibentry}
	\fi
\fi

\usepackage[capitalize]{cleveref}
\crefname{equation}{}{}
\Crefname{equation}{}{}
\crefname{claim}{claim}{claims}
\crefname{step}{step}{steps}
\crefname{line}{line}{lines}
\crefname{condition}{condition}{conditions}
\crefname{dmath}{}{}
\crefname{dseries}{}{}
\crefname{dgroup}{}{}

\crefname{Problem}{Problem}{Problems}
\crefformat{Problem}{Problem~#2#1#3}
\crefrangeformat{Problem}{Problems~#3#1#4 to~#5#2#6}

\crefname{Theorem}{Theorem}{Theorems}
\crefname{Corollary}{Corollary}{Corollaries}
\crefname{Proposition}{Proposition}{Propositions}
\crefname{Lemma}{Lemma}{Lemmas}
\crefname{Definition}{Definition}{Definitions}
\crefname{Example}{Example}{Examples}
\crefname{Assumption}{Assumption}{Assumptions}
\crefname{Remark}{Remark}{Remarks}
\crefname{Rem}{Remark}{Remarks}
\crefname{remarks}{Remarks}{Remarks}
\crefname{Appendix}{Appendix}{Appendices}
\crefname{Supplement}{Supplement}{Supplements}
\crefname{Exercise}{Exercise}{Exercises}
\crefname{Theorem_A}{Theorem}{Theorems}
\crefname{Corollary_A}{Corollary}{Corollaries}
\crefname{Proposition_A}{Proposition}{Propositions}
\crefname{Lemma_A}{Lemma}{Lemmas}
\crefname{Definition_A}{Definition}{Definitions}

\usepackage{crossreftools}
\ifx\notloadhyperref\undefined
	\pdfstringdefDisableCommands{%
		\let\Cref\crtCref
		\let\cref\crtcref
	}
\else
	\relax
\fi

\usepackage{algorithm,algorithmic}

\ifx\loadbreqn\undefined
	\relax
\else
	\usepackage{breqn}
\fi


\interdisplaylinepenalty=2500   



\makeatletter
\def\cleartheorem#1{%
    \expandafter\let\csname#1\endcsname\relax
    \expandafter\let\csname c@#1\endcsname\relax
}
\def\clearthms#1{ \@for\tname:=#1\do{\cleartheorem\tname} }
\makeatother

\ifx\renewtheorem\undefined
	\ifx\useTheoremCounter\undefined
		\newtheorem{Theorem}{Theorem}
		\newtheorem{Corollary}{Corollary}
		\newtheorem{Proposition}{Proposition}
		\newtheorem{Lemma}{Lemma}
	\else
		\newtheorem{Theorem}{Theorem}

	\fi

	\newtheorem{Definition}{Definition}
	\newtheorem{Example}{Example}


\fi

\theoremstyle{remark}

\theoremstyle{plain}




\newcommand{\qednew}{\nobreak \ifvmode \relax \else
		\ifdim\lastskip<1.5em \hskip-\lastskip
			\hskip1.5em plus0em minus0.5em \fi \nobreak
		\vrule height0.75em width0.5em depth0.25em\fi}

\makeatletter

\makeatother

\NewDocumentCommand{\movedownsub}{e{^_}}{%
	\IfNoValueTF{#1}{%
		\IfNoValueF{#2}{^{}}
	}{%
		^{#1}
	}%
	\IfNoValueF{#2}{_{#2}}
}

\let\latexchi\chi
\RenewDocumentCommand{\chi}{}{\latexchi\movedownsub}

\newcommand{\Real}{\mathbb{R}}



\newcommand{\calH}{\mathcal{H}}



\newcommand{\bbS}{\mathbb{S}}



\newcommand{\scN}{\mathscr{N}}

\DeclareSymbolFont{bsfletters}{OT1}{cmss}{bx}{n}
\DeclareSymbolFont{ssfletters}{OT1}{cmss}{m}{n}
\DeclareMathSymbol{\bsfGamma}{0}{bsfletters}{'000}
\DeclareMathSymbol{\ssfGamma}{0}{ssfletters}{'000}
\DeclareMathSymbol{\bsfDelta}{0}{bsfletters}{'001}
\DeclareMathSymbol{\ssfDelta}{0}{ssfletters}{'001}
\DeclareMathSymbol{\bsfTheta}{0}{bsfletters}{'002}
\DeclareMathSymbol{\ssfTheta}{0}{ssfletters}{'002}
\DeclareMathSymbol{\bsfLambda}{0}{bsfletters}{'003}
\DeclareMathSymbol{\ssfLambda}{0}{ssfletters}{'003}
\DeclareMathSymbol{\bsfXi}{0}{bsfletters}{'004}
\DeclareMathSymbol{\ssfXi}{0}{ssfletters}{'004}
\DeclareMathSymbol{\bsfPi}{0}{bsfletters}{'005}
\DeclareMathSymbol{\ssfPi}{0}{ssfletters}{'005}
\DeclareMathSymbol{\bsfSigma}{0}{bsfletters}{'006}
\DeclareMathSymbol{\ssfSigma}{0}{ssfletters}{'006}
\DeclareMathSymbol{\bsfUpsilon}{0}{bsfletters}{'007}
\DeclareMathSymbol{\ssfUpsilon}{0}{ssfletters}{'007}
\DeclareMathSymbol{\bsfPhi}{0}{bsfletters}{'010}
\DeclareMathSymbol{\ssfPhi}{0}{ssfletters}{'010}
\DeclareMathSymbol{\bsfPsi}{0}{bsfletters}{'011}
\DeclareMathSymbol{\ssfPsi}{0}{ssfletters}{'011}
\DeclareMathSymbol{\bsfOmega}{0}{bsfletters}{'012}
\DeclareMathSymbol{\ssfOmega}{0}{ssfletters}{'012}


\newcommand{\bmu}{\bm{\mu}}

\makeatletter
\newcommand*\rel@kern[1]{\kern#1\dimexpr\macc@kerna}
\newcommand*\widebar[1]{%
  \begingroup
  \def\mathaccent##1##2{%
    \rel@kern{0.8}%
    \overline{\rel@kern{-0.8}\macc@nucleus\rel@kern{0.2}}%
    \rel@kern{-0.2}%
  }%
  \macc@depth\@ne
  \let\math@bgroup\@empty \let\math@egroup\macc@set@skewchar
  \mathsurround\z@ \frozen@everymath{\mathgroup\macc@group\relax}%
  \macc@set@skewchar\relax
  \let\mathaccentV\macc@nested@a
  \macc@nested@a\relax111{#1}%
  \endgroup
}
\makeatother

\DeclareMathOperator*{\argmax}{arg\,max}

\DeclareMathOperator{\Tr}{Tr}

\newcommand{\ifbcdot}[1]{\ifblank{#1}{\cdot}{#1}}

\DeclarePairedDelimiterX\abs[1]{\lvert}{\rvert}{\ifbcdot{#1}}
\DeclarePairedDelimiterX\parens[1]{(}{)}{\ifbcdot{#1}}
\DeclarePairedDelimiterX\brk[1]{[}{]}{\ifbcdot{#1}}
\DeclarePairedDelimiterX\braces[1]{\{}{\}}{\ifbcdot{#1}}
\DeclarePairedDelimiterX\angles[1]{\langle}{\rangle}{\ifblank{#1}{\cdot,\cdot}{#1}}
\DeclarePairedDelimiterX\ip[2]{\langle}{\rangle}{\ifbcdot{#1},\ifbcdot{#2}}
\DeclarePairedDelimiterX\norm[1]{\lVert}{\rVert}{\ifbcdot{#1}}
\DeclarePairedDelimiterX\ceil[1]{\lceil}{\rceil}{\ifbcdot{#1}}
\DeclarePairedDelimiterX\floor[1]{\lfloor}{\rfloor}{\ifbcdot{#1}}

\DeclareFontFamily{U}{matha}{\hyphenchar\font45}
\DeclareFontShape{U}{matha}{m}{n}{
      <5> <6> <7> <8> <9> <10> gen * matha
      <10.95> matha10 <12> <14.4> <17.28> <20.74> <24.88> matha12
      }{}
\DeclareSymbolFont{matha}{U}{matha}{m}{n}
\DeclareFontSubstitution{U}{matha}{m}{n}

\DeclareFontFamily{U}{mathx}{\hyphenchar\font45}
\DeclareFontShape{U}{mathx}{m}{n}{
      <5> <6> <7> <8> <9> <10>
      <10.95> <12> <14.4> <17.28> <20.74> <24.88>
      mathx10
      }{}
\DeclareSymbolFont{mathx}{U}{mathx}{m}{n}
\DeclareFontSubstitution{U}{mathx}{m}{n}

\DeclareMathDelimiter{\vvvert}{0}{matha}{"7E}{mathx}{"17}
\DeclarePairedDelimiterX\vertiii[1]{\vvvert}{\vvvert}{\ifbcdot{#1}}

\DeclarePairedDelimiterXPP\trace[1]{\operatorname{Tr}}{(}{)}{}{\ifbcdot{#1}} 
\DeclarePairedDelimiterXPP\col[1]{\operatorname{col}}{\{}{\}}{}{\ifbcdot{#1}} 
\DeclarePairedDelimiterXPP\row[1]{\operatorname{row}}{\{}{\}}{}{\ifbcdot{#1}} 
\DeclarePairedDelimiterXPP\erf[1]{\operatorname{erf}}{(}{)}{}{\ifbcdot{#1}}
\DeclarePairedDelimiterXPP\erfc[1]{\operatorname{erfc}}{(}{)}{}{\ifbcdot{#1}}
\DeclarePairedDelimiterXPP\KLD[2]{D}{(}{)}{}{\ifbcdot{#1}\, \delimsize\|\, \ifbcdot{#2}} 
\DeclarePairedDelimiterXPP\op[2]{\operatorname{#1}}{(}{)}{}{#2} 


\newcommand{\ud}{\,\mathrm{d}} 

\DeclarePairedDelimiterXPP\indicate[1]{{\bf 1}}{\{}{\}}{}{\ifbcdot{#1}}

\NewDocumentCommand\ofrac{s m}{%
	\IfBooleanTF#1%
	{\dfrac{1}{#2}}%
	{\frac{1}{#2}}%
}
\NewDocumentCommand\ddfrac{s m m}{%
	\IfBooleanTF#1%
	{\dfrac{\mathrm{d} {#2}}{\mathrm{d} {#3}}}%
	{\frac{\mathrm{d} {#2}}{\mathrm{d} {#3}}}%
}
\NewDocumentCommand\ppfrac{s m m}{%
	\IfBooleanTF#1%
	{\dfrac{\partial {#2}}{\partial {#3}}}%
	{\frac{\partial {#2}}{\partial {#3}}}%
}

\providecommand\given{}

\DeclarePairedDelimiterX\Set[2]\{\}{%
\renewcommand\given{\SetSymbol[\delimsize]{#1}}
#2
}
\DeclarePairedDelimiterX\Setc[1]\{\}{%
\renewcommand\given{\SetSymbol{:}}
#1
}

\NewDocumentCommand\set{s o m}{%
	\IfBooleanTF#1%
	{\IfValueTF{#2}{\Set*{#2}{#3}}{\Setc*{#3}}}%
	{\IfValueTF{#2}{\Set{#2}{#3}}{\Setc{#3}}}%
}


\NewDocumentCommand{\evalat}{ s O{\big} m e{_^} }{%
\IfBooleanTF{#1}%
{\left. #3 \right|}{#3#2|}%
\IfValueT{#4}{_{#4}}%
\IfValueT{#5}{^{#5}}%
}


\providecommand\given{}
\DeclarePairedDelimiterXPP\cprob[1]{}(){}{
\renewcommand\given{\nonscript\,\delimsize\vert\allowbreak\nonscript\,\mathopen{}}%
#1%
}
\DeclarePairedDelimiterXPP\cexp[1]{}[]{}{
\renewcommand\given{\nonscript\,\delimsize\vert\allowbreak\nonscript\,\mathopen{}}%
#1%
}

\DeclareDocumentCommand \P { s e{_^} d() g } {%
	\mathbb{P}%
	\IfBooleanTF{#1}%
		{
			\IfValueT{#2}{_{#2}}%
			\IfValueT{#3}{^{#3}}%
			\IfValueTF{#5}{\cprob{#4 \given #5}}{\IfValueT{#4}{\cprob{#4}}}%
		}%
		{
			\IfValueT{#2}{_{#2}}%
			\IfValueT{#3}{^{#3}}%
			\IfValueTF{#5}{\cprob*{#4 \given #5}}{\IfValueT{#4}{\cprob*{#4}}}%
		}%
}

\DeclareDocumentCommand \E { s e{_^} o g } {%
	\mathbb{E}%
	\IfBooleanTF{#1}%
		{
			\IfValueT{#2}{_{#2}}%
			\IfValueT{#3}{^{#3}}%
			\IfValueTF{#5}{\cexp{#4 \given #5}}{\IfValueT{#4}{\cexp{#4}}}%
		}%
		{
			\IfValueT{#2}{_{#2}}%
			\IfValueT{#3}{^{#3}}%
			\IfValueTF{#5}{\cexp*{#4 \given #5}}{\IfValueT{#4}{\cexp*{#4}}}%
		}%
}

\ExplSyntaxOn
\NewDocumentCommand \dist {m o o} {%
\mathrm{#1}\left(%
	\IfValueT{#3}{%
		\tl_if_blank:nTF{ #3 }{\cdot\, \middle|\, }{#3\, \middle|\, }%
	}
	\IfValueT{#2}{#2}%
\right)%
}
\ExplSyntaxOff


\NewDocumentCommand {\cbrace} {t+ D[]{black} D(){\widthof{#5}} m m } {%
	\begingroup%
		\color{#2}
		\IfBooleanTF{#1}{%
			\overbrace{#4}^%
		}{
			\underbrace{#4}_%
		}%
		{\parbox[c]{#3}{\centering\footnotesize{#5}}}%
	\endgroup%
}

\let\oldforall\forall
\renewcommand{\forall}{\oldforall \, }

\let\oldexist\exists
\renewcommand{\exists}{\oldexist \, }


\graphicspath{{./Figures/}{./figures/}}
\pdfsuppresswarningpagegroup=1

\DeclareDocumentCommand{\includeCroppedPdf}{ o O{./Figures/} m }{
	\IfFileExists{#2#3-crop.pdf}{}{%
		\immediate\write18{pdfcrop #2#3.pdf #2#3-crop.pdf}}%
	\includegraphics[#1]{#2#3-crop.pdf}
}



\makeatletter
\newcommand*{\addFileDependency}[1]{
  \typeout{(#1)}
  \@addtofilelist{#1}
  \IfFileExists{#1}{}{\typeout{No file #1.}}
}
\makeatother

\definecolor{gray90}{gray}{0.9}

\ifx\nohighlights\undefined
	\newcommand{\red}[1]{{\color{red} #1}}
	\newcommand{\blue}[1]{{{\color{blue} #1}}}

	\newcommand{\msout}[1]{\text{\color{green} \sout{\ensuremath{#1}}}}
	\newcommand{\del}[1]{{\color{green}\ifmmode \msout{#1}\else\sout{#1}\fi}}
\else
	\newcommand{\red}[1]{#1}
	\newcommand{\blue}[1]{#1}

	\newcommand{\msout}[1]{#1}
	\newcommand{\del}[1]{#1}
\fi

\newcommand{\hhide}[1]{}


\ifx\diagnoselabel\undefined
	\relax
\else
	\makeatletter
	\def\@testdef #1#2#3{%
		\def\reserved@a{#3}\expandafter \ifx \csname #1@#2\endcsname
			\reserved@a  \else
			\typeout{^^Jlabel #2 changed:^^J%
				\meaning\reserved@a^^J%
				\expandafter\meaning\csname #1@#2\endcsname^^J}%
			\@tempswatrue \fi}
	\makeatother
\fi



\usepackage{amsmath,amsfonts,bm}









\def\eqref#1{equation~\ref{#1}}









\def\ceil#1{\lceil #1 \rceil}
\def\floor#1{\lfloor #1 \rfloor}
\def\1{\bm{1}}








\def\vc{{\bm{c}}}

\def\vh{{\bm{h}}}

\def\vp{{\bm{p}}}

\def\vr{{\bm{r}}}

\def\vu{{\bm{u}}}

\def\vx{{\bm{x}}}
\def\vy{{\bm{y}}}


\def\mA{{\bm{A}}}

\def\mD{{\bm{D}}}

\def\mF{{\bm{F}}}

\def\mI{{\bm{I}}}

\def\mL{{\bm{L}}}

\def\mO{{\bm{O}}}

\def\mU{{\bm{U}}}

\def\mX{{\bm{X}}}

\DeclareMathAlphabet{\mathsfit}{\encodingdefault}{\sfdefault}{m}{sl}
\SetMathAlphabet{\mathsfit}{bold}{\encodingdefault}{\sfdefault}{bx}{n}


\def\gE{{\mathcal{E}}}

\def\gG{{\mathcal{G}}}
\def\gH{{\mathcal{H}}}

\def\gV{{\mathcal{V}}}



\def\sK{{\mathbb{K}}}

\def\sM{{\mathbb{M}}}

\def\sS{{\mathbb{S}}}
\def\sT{{\mathbb{T}}}









\newcommand{\R}{\mathbb{R}}

\newcommand{\softmax}{\mathrm{softmax}}





\begin{document}

\title{Distributional Signals for Node Classification in Graph Neural Networks}

\author{Feng~Ji, See~Hian~Lee, Kai~Zhao, Wee~Peng~Tay,~\IEEEmembership{Senior Member,~IEEE} and Jielong~Yang
\thanks{F. Ji, S. H. Lee, K. Zhao and W. P. Tay are with the School of Electrical and Electronic Engineering, Nanyang Technological University, 639798, Singapore. J. Yang is with the School of Internet of Things Engineering, Jiangnan University, China.}
}



\maketitle

\begin{abstract}
In graph neural networks (GNNs), both node features and labels are examples of graph signals, a key notion in graph signal processing (GSP). While it is common in GSP to impose signal smoothness constraints in learning and estimation tasks, it is unclear how this can be done for discrete node labels. We bridge this gap by introducing the concept of distributional graph signals. In our framework, we work with the distributions of node labels instead of their values and propose notions of smoothness and non-uniformity of such distributional graph signals. We then propose a general regularization method for GNNs that allows us to encode distributional smoothness and non-uniformity of the model output in semi-supervised node classification tasks. Numerical experiments demonstrate that our method can significantly improve the performance of most base GNN models in different problem settings. 
\end{abstract}

\begin{IEEEkeywords}
Graph neural networks, graph signal processing, distributional signals, node classification, smoothness
\end{IEEEkeywords}

\section{Introduction} \label{sec:intro}
We consider the semi-supervised node classification problem \cite{Kip17} that determines class labels of nodes in graphs given sample observations and possibly node features. Numerous graph neural network (GNN) models have been proposed to tackle this problem. One of the first models is the graph convolutional network (GCN) \cite{Def16}. Interpreted geometrically, a GCN aggregates information such as node features from the neighborhood of each node of the graph. Algebraically, this process is equivalent to applying a graph convolution filter to node feature vectors. Subsequently, many GNN models with different considerations are introduced. Popular models include the graph attention network (GAT) \cite{Vel18} that learns weights between pairs of nodes during aggregation, and the hyperbolic graph convolutional neural network (HGCN) \cite{Cha19} that considers embedding of nodes of a graph in a hyperbolic space instead of a Euclidean space. For inductive learning, GraphSAGE \cite{Ham17} is proposed to generate low-dimensional vector representations for nodes that are useful for graphs with rich node attribute information. While new models draw inspiration from GCN, GCN itself is built upon the foundation of graph signal processing (GSP). 

GSP is a signal processing framework that handles graph-structured data \cite{Shu13, Ort18, Ji19}. A graph signal is a vector with each component corresponding to a node of a graph. Examples include node features and node labels. Moreover, convolutions used in models such as GCN are special cases of convolution filters in GSP \cite{Shu13}. All these show the close connections between GSP theory and GNNs.  

In GSP, signal smoothness (over the graph) is widely used to regularize inference tasks. Intuitively, a signal is smooth if its values are similar at each pair of nodes connected by an edge. One popular way to formally define signal smoothness is to use the Laplacian quadratic form. There are numerous GSP tools that leverage a smooth prior of the graph signals. For example, Laplacian (Tikhonov) regularization is proposed for noise removal in \cite{Shu13} and signal interpolation \cite{Nar13}. In \cite{Che15}, it is used in graph signal in-painting and anomaly detection. In \cite{Kal16}, the same technique is used for graph topology inference.

However, for GNNs, it is remarked in \cite[Section~4.1.2]{Yan21}  that ``graph Laplacian regularization can hardly provide extra information that existing GNNs cannot capture''. Therefore, a regularization scheme based on feature propagation is proposed. It is demonstrated to be effective by comparing with other methods such as \cite{Fen21} and \cite{Den19} based on adversarial learning and \cite{Str19} that co-trains GNN models with an additional agreement model, which gives the probability that two nodes have the same label. We partially agree with the above assertion regarding graph Laplacian regularization, while remaining reservative about its full correctness. In this paper, we propose a method that is inspired by Laplacian regularization. As our main contribution, we introduce the notion of distributional graph signals, instead of considering graph signals. Analogous to the graph signal smoothness defined using graph Laplacian, we define the smoothness of distributional graph signals. Together with another property known as non-uniformity, we devise a regularization scheme for GNNs in node classification tasks. This approach is easy to implement and can be used as a plug-in regularization term together with any given base GNN model. Its effectiveness is demonstrated with numerical results. 

The rest of the paper is organized as follows. We formally introduce the notion of distributional graph signals in \cref{sec:dist}. In \cref{sec:step}, we motivate by analyzing step graph signals naturally occurring in node classification problems. We study two key properties, namely smoothness and non-uniformity, of distributional graph signals in \cref{sec:tot} and \cref{sec:non}. Based on the discussions, we describe the proposed regularization scheme in \cref{sec:mod}. In \cref{sec:fur}, we analyze the model and discuss its relations with related works. We present experimental results in \cref{sec:exp} and conclude in \cref{sec:con}.

\section{Distributional graph signals} \label{sec:dist}

In this section, we motivate and introduce distributional graph signals based on GSP theory. 

\subsection{GSP preliminaries and signal smoothness} \label{sec:gsp}

In this subsection, we give a brief overview of GSP theory \cite{Shu13}. The focus is on the discussion of graph signal smoothness. 

Let $\gG = (\gV, \gE)$ be an undirected graph with $\gV$ the vertex set and $\gE$ the edge set. Suppose the size of the graph is $n = |\gV|$. Fix an ordering of $\gV$. Then, the \emph{space of graph signals} can be identified with the vector space $\R^n$, with a graph signal $\vx \in \R^n$, which assigns its $i$-th component to the $i$-th vertex of $\gG$. By convention, signals are in column form, and $\vx(i)$ is the $i$-th component of $\vx$.  

In GSP, the key notion is \emph{the graph shift operator}. Though there are several choices for the graph shift operator, in our paper, we consider a common choice: $\mL_{\gG}$, the Laplacian of $\gG$, defined by $\mL_{\gG} = \mD_{\gG} - \mA_{\gG}$, where $\mD_{\gG}, \mA_{\gG}$ are the degree matrix and adjacency matrix of $\gG$, respectively. The Laplacian $\mL_{\gG}$ is positive semi-definite and symmetric. By the spectral theorem, it has an eigendecomposition $\mL_{\gG} = \mU_{\gG}\Lambda_{\gG}\mU_{\gG}^\top$. In the decomposition, $\Lambda_{\gG}$ is a diagonal matrix, whose diagonal entries $\{\lambda_1,\ldots,\lambda_n\}$ are eigenvalues of $\mL_{\gG}$. They are non-negative and we assume $\lambda_1\leq \ldots \leq \lambda_n$. The associated eigenbasis $\{\vu_1,\ldots, \vu_n\}$ are the columns of $\mU_{\gG}$. In GSP, an eigenvector with a small eigenvalue (and hence a small index) is considered to be smooth. The signal values of such a vector have small fluctuations across the edges of $\gG$.    

Given a graph signal $\vx$, its \emph{graph Fourier transform} is $\hat{\vx} = \mU_{\gG}^\top \vx$, or equivalently, $\hat{\vx}(i) = \langle \vx, \vu_i \rangle$, for $1\leq i\leq n$. The components $\hat{\vx}(i)$ of $\hat{\vx}$ are called the \emph{frequency components} of $\vx$. Same as above, the signal $\vx$ is smooth if $\hat{\vx}(i)$ has a small absolute value for large $i$. Quantitatively, we can define its \emph{total variation} by 
\begin{align} \label{eq:tvv}
\mathcal{T}_G(\vx) = \sum_{(v_i,v_j) \in \gE}(\vx(i)-\vx(j))^2 = \vx^\top \mL_{\gG} \vx.        
\end{align}
It is straightforward to compute that $\mathcal{T}_G(\vu_i) = \lambda_i$. This observation indicates that it is reasonable to use total variation as a measure of smoothness. Minimizing the total variation of graph signals has many applications in GSP as we have pointed out in \cref{sec:intro}. 

\subsection{Step graph signals} \label{sec:step}

Let $\sS$ be a finite set of numbers. A \emph{step graph signal \gls{wrt} $\sS$} is a graph signal $\vx$ such that all its components take values in $\sS$, i.e., $\vx \in \sS^n$. 

\begin{Example} \label{eg:fts}
For the simplest example, consider the classical Heaviside function $H$ on $\R$ defined by $H(x) = 1$, for $x > 0$ and $H(x) = 0$, for $x\leq 0$. It is a non-smooth function, as it is not even continuous at $x=0$. On the other hand, let $\gG$ be the path graph with $2m+1$ nodes embedded on the real line by identifying the nodes of $\gG$ with the integers in the interval $[-m,m]$. Then $H$ induces a step graph signal $\vh$ on $\gG$. Same as the Heaviside function $H$, the signal $\vh$ should be considered to be a non-smooth graph signal.   
\end{Example}

Step graph signals occur naturally in semi-supervised node classification tasks. In particular, if $\sS$ is the set of all possible class labels, then the labels of the nodes of $\gG$ form a step graph signal $\vc$ \gls{wrt} $\sS$ on $\gG$. We expect that analogous to \cref{eg:fts}, $\vc$ can possibly be non-smooth. To demonstrate, we analyze $\vc$ using its Fourier transform $\hat{\vc}$ (cf.\ \cref{sec:gsp}). More specifically, we take $\gG_0$ to be the main connected component of the Cora graph \cite{Sen08} and $\vc$ to be the ground truth labels. We also generate a random signal $\vr$ following the same empirical distribution (on $\sS$) estimated using $\vc$. We show plots of both Fourier transforms $\hat{\vc}$ and $\hat{\vr}$ in \cref{fig:cl}. We see that the high-frequency components of the ground truth labels $\vc$ can be large, and it is even possible that its spectrum, i.e., frequency components, resemble that of a random signal. The observations support our speculation about the non-smoothness of the step signals. Therefore, in order to leverage signal smoothness to enhance model performance, we need to find an alternative to the step (label) signals. This also supports the remark of \cite{Yan21} regarding Laplacian regularization from a different point of view (see also the experiments in \cref{sec:ana}).

\begin{figure}[!htb]
\centering
\includegraphics[width=0.48\columnwidth, trim=0cm 0cm 0cm 0cm,clip]{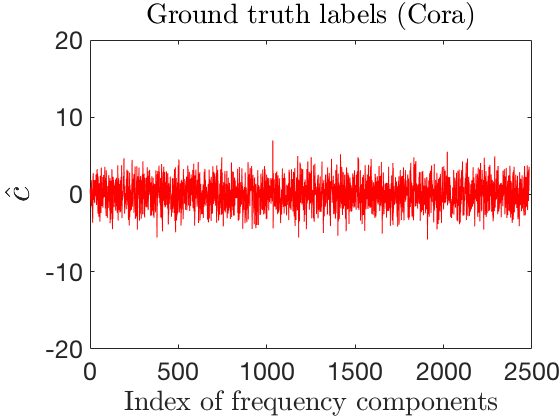}
\includegraphics[width=0.48\columnwidth, trim=0cm 0cm 0cm 0cm,clip]{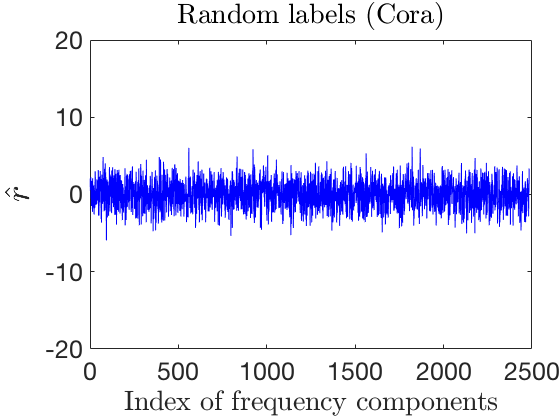}
\includegraphics[width=0.48\columnwidth, trim=0cm 0cm 0cm 0cm,clip]{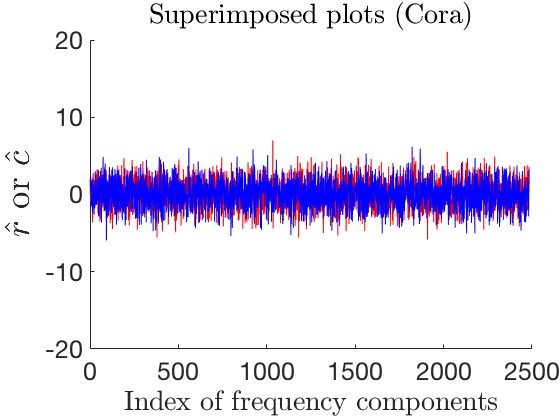}
\caption{Plots of $\hat{\vc}$ and $\hat{\vr}$ for Cora.}\label{fig:cl}
\end{figure} 

\subsection{Distributional graph signals and marginals} \label{sec:dis}

We want to use probability theory to introduce the notion of ``smoothness'' for step signals in the next section. For preparation, in this subsection, we formally introduce distributional graph signals and discuss how they arise naturally in GNNs. Our theory (e.g., in \cref{defn:tms} and equation \cref{eq:tmu}) is based on probability measures defined on metric spaces. To this end, we endow any discrete space $\sS$ with the discrete metric $d(s_1,s_2) = 1$ if $s_1\neq s_2 \in \sS$ and $0$ otherwise. For $\Real^n$, one can use the usual Euclidean metric or any other norm.

\begin{Definition} \label{defn:tms}
For a metric space $\sM = \sS^n$ or $\R^n$, let $\mathcal{P}(\sM)$ be the space of probability measures on $\sM$ (\gls{wrt} the Borel $\sigma$-algebra) having finite second moments. An element $\mu \in \mathcal{P}(\sM)$ is called a \emph{distributional graph signal}. The marginals of a distributional graph signal $\mu$ are the marginal distributions $\scN = \{\mu_i, 1\leq i\leq n\}$ \gls{wrt} the $n$ coordinates of either $\bbS^n$ or $\R^n$. 
\end{Definition}

To understand why a distribution $\mu$ is related to GSP, consider a step graph signal (resp.\ ordinary graph signal) $\vx$ in $\sS^n$ (resp.\ $\R^n$). It induces the delta distribution $\delta_{\vx}$ in $\mathcal{P}(\sS^n)$ supported at $\vx$. Its marginals are the delta distributions $\{\delta_{\vx(i)}, 1\leq i\leq n\}$. Therefore, \cref{defn:tms} subsumes ordinary graph signals as special cases.  

\begin{figure}[!htb]
\centering
\includegraphics[width=0.98\columnwidth, trim=0cm 0cm 0cm 0cm,clip]{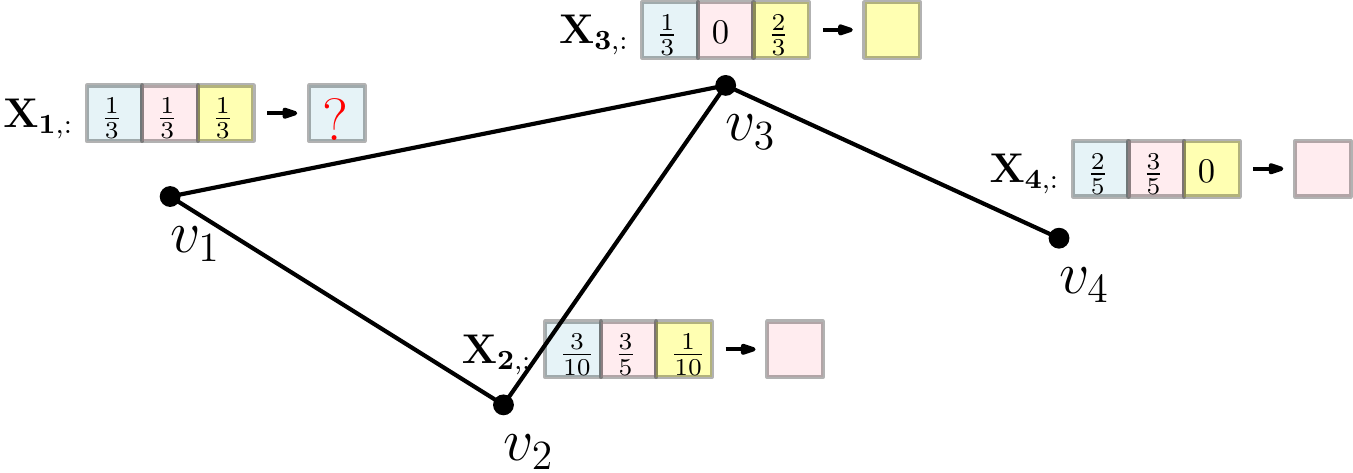}
\caption{We give an example of the marginals of a distributional graph signal. We notice that for $\mX_{1,:}$, the probability weights are equal and it is hard to determine the $\argmax$ of the $3$ classes. Moreover, the weights of the $2$nd class have large differences along a few edges, e.g., $(v_2,v_3), (v_3,v_4)$. These features make prediction unreliable. We propose solutions to such issues in this paper.} \label{fig:dist}
\end{figure} 

Distributional graph signals also occur naturally in GNN models (illustrated in \cref{fig:dist}). As in the previous subsection, let $\sS$ be the set of all possible class labels of size $m$ and $M_{n,m}(\R)$ be the space of $n\times m$ matrices. For a typical model $\mathcal{M}$ such as the GCN, the last stage of the pipeline usually consists of the following steps\footnote{To simplify the presentation, we assume that all models considered have the intermediate softmax step, though some implementations omit this part using the fact that the exponential function is increasing.}:
\begin{align} \label{eq:log}
\begin{split}
    &\text{Logits: } \mO \in M_{n,m}(\R) \xrightarrow{\text{Softmax}:\phi} \mX = \phi(\mO) \in M_{n,m}(\R) \\ &\xrightarrow{\argmax} \text{Output labels: } \vc \in \sS^n.
    \end{split}
\end{align}
The $i$-th row $\mX_{i,:}$ of $\mX$ can be viewed as the weights of a probability distribution $\mu_{\mX,i}$ on $\sS$. They do not directly give a distributional graph signal in $\mathcal{P}(\sS^n)$. However, we shall interpret $\scN_{\mX} = \{\mu_{\mX,i}, 1\leq i\leq n\}$ as the marginals of some unknown distributional graph signal, and $\scN_{\mX}$ is the main subject of study in this paper. In order to mimic the ways smoothness of graph signals is used in GSP, we introduce an appropriate notion of the total variation of distributional graph signals in the next section. Moreover, we also want to study a unique property of the non-uniformity of distributional graph signals. A signal processing theory of distributional graph signal is developed in \cite{Jif23}. 

\section{Regularized graph neural networks} \label{sec:reg}

In this section, we study smoothness and non-uniformity properties of distributional graph signals in \cref{sec:tot} and \cref{sec:non} respectively. Smoothness refers to the similarity of distributions for neighboring nodes, which is different from the classical point of view. Moreover, it only makes sense to talk about non-uniformity when dealing with distributional graph signals. Each of these subsections yields an expression that we want to minimize. They are combined in \cref{sec:mod} to give the proposed regularization term. Both smoothness and non-uniformity are important for the effectiveness of the proposed model as we shall demonstrate with numerical experiments.

\subsection{Smoothness of distributional graph signals} \label{sec:tot}

The goal of this subsection is to introduce and compare different notions of total variation associated with distributional graph signals and their marginals that leverage signal smoothness. First of all, given $\mu \in \mathcal{P}(\sM)$ for $\sM = \sS^n$ or $\R^n$, its total variation can be modified directly from (\ref{eq:tvv}) as follows:
\begin{align} \label{eq:tmu}
    \mathcal{T}_G(\mu) = \E_{\vx \sim \mu}\mathcal{T}_G(\vx) = \int \sum_{(v_i,v_j)\in \gE} d(\vx(i),\vx(j))^2 \ud\mu(\vx).
\end{align}
where $d$ is the metric on $\sS$ or $\R$.

However, in many cases of interest such as GNNs, only marginals of some $\mu$ are observed (cf.\ \cref{sec:dis}). We want to define total variation in such a situation as well. We borrow ideas from the prototype of the Wasserstein metric \cite{Vil09}, which we recall now. The notations and assumptions follow those of \cref{defn:tms}. 

\begin{Definition} \label{defn:fmm}
For $\mu_1, \mu_2\in \mathcal{P}(\sM)$, the \emph{Wasserstein metric} $W(\mu_1,\mu_2)$ between $\mu_1,\mu_2$ is defined by
\begin{align*}
    W(\mu_1,\mu_2)^2 = \inf_{\gamma\in \Gamma(\mu_1,\mu_2)}\int d(x,y)^2 \ud\gamma(x,y),
\end{align*}
where $\Gamma(\mu_1,\mu_2)$ is the set of \emph{couplings} of $\mu_1,\mu_2$, i.e., the collection of probability measures on $\sM \times \sM$ whose marginals are $\mu_1$ and $\mu_2$, respectively.
\end{Definition}

It can be verified that $W(\cdot,\cdot)$ indeed defines a metric on $\mathcal{P}(\sM)$ (see \cite{Vil09}). As a special case, if $\delta_x$ and $\delta_y$ are delta distributions supported on $x,y\in \sM$, then $W(\delta_x,\delta_y) = d(x,y)$. The key insight is that we want to take infimum over all possible distributions given the prescribed marginals; and whatever we define, it should subsume (\ref{eq:tvv}) as a special case for a collection of delta distributions. 

\begin{Definition}
Given $\scN = \{\mu_i, 1\leq i\leq n\}$ with $\mu_i \in \mathcal{P}(\sS)$ (resp.\ $\mu_i \in \mathcal{P}(\R)$), for $1\leq i\leq n$, then the total variation of $\scN$ is defined as:
\begin{align}
    \mathcal{T}_G(\scN) = \inf_{\mu \in \Gamma(\scN)} \mathcal{T}_G(\mu),
\end{align}
where $\Gamma(\scN)$ is the collection of all distributional graph signals in $\mathcal{P}(\sS^n)$ (resp.\ $\mathcal{P}(\R^n)$) whose marginals agree with $\scN$. 
\end{Definition}

Though an important theoretical tool, the Wasserstein metric is usually difficult to compute explicitly. On the other hand, if $\gG$ is the graph with $2$ nodes connected by an edge and $\scN = \{\mu_1,\mu_2\}$, then $\mathcal{T}_G(\scN) = W(\mu_1,\mu_2)^2$. As a consequence, finding the exact value of  $\mathcal{T}_G(\scN)$ can be challenging. Therefore, we next introduce approximations that can be more readily estimated. 

In graph signal processing, a useful strategy to understand a general graph is to study its spanning trees \cite{Sha11, Jif19}. A set $\sT = \{T_1,\ldots, T_k\}$ of $k$-spanning trees of $G$ \emph{covers} if each edge of $G$ is contained in $T_i$ for some $1\leq i\leq k$.  

\begin{Definition}
    Define the \emph{covering total variation} $\mathcal{T}^c_G(\scN)$ by
    \begin{align*}
        \mathcal{T}^c_G(\scN) = \inf_{\sT=\{T_1,\ldots,T_k\}}\sum_{1\leq i\leq k}\mathcal{T}_{T_i}(\scN), 
    \end{align*}
    where the infimum is over all covers $\mathbb{T}$ of $G$ by spanning trees. 
\end{Definition}

$\mathcal{T}_G(\scN)$ is the most principled notion of total variation. The proxy $\mathcal{T}^c_G(\scN)$ retains important features of $\mathcal{T}_G(\scN)$ that encode edgewise differences of distributional signals while taking the entire graph $G$ into consideration. Moreover, $\mathcal{T}^c_G(\scN)$ can be readily estimated using more familiar quantities.   

For $\scN = \{\mu_i, 1\leq i\leq n\}$ with $\mu_i\in \mathcal{P}(\sS)$, let $\bmu_s \in \R^n$ be the graph signal of probability weights of $s \in \sS$, i.e., $\bmu_s(i)$ is the probability weight of $\mu_i$ at $s$. We can also stack these signals as a matrix $\mX_{\scN}$ whose columns are $\bmu_s$. Based on the GSP version of total variation (cf.\ \cref{eq:tvv}), we introduce two more versions of total variation that are easy to compute.

\begin{Definition} \label{defn:giv}
Given $\scN = \{\mu_i, 1\leq i\leq n\}$ with $\mu_i \in \mathcal{P}(\sS)$, we define the $\ell^1$ and $\ell^2$ versions of total variations as:
\begin{itemize}
    \item $\mathcal{T}_{G,1}(\scN) = \sum_{s\in \sS}\sum_{(v_i,v_j) \in \gE}|\bmu_s(i)-\bmu_s(j)|$, and
    \item $\mathcal{T}_{G,2}(\scN) = \sum_{s\in \sS}\sum_{(v_i,v_j) \in \gE}\big(\bmu_s(i)-\bmu_s(j)\big)^2 = \Tr(\mX_{\scN}^\top\mL_{\gG}\mX_{\scN})$, where $\Tr$ is the matrix trace. 
\end{itemize}
\end{Definition}

The complexity of computing either $\mathcal{T}_{G,1}$ or $\mathcal{T}_{G,2}$ is at most $\mathcal{O}(|\sS||\gE|)$, and it involves only matrix multiplication for $\mathcal{T}_{G,2}$. In addition, $\mathcal{T}^c_G, \mathcal{T}_{G,1}$ and $\mathcal{T}_{G,2}$ satisfy the following relation.

\begin{Theorem} \label{thm:giv}
Given $\scN =\{\mu_i, 1\leq i\leq n\}$ with $\mu_i \in \mathcal{P}(\sS)$, we have
\begin{align*}
    \mathcal{T}_{G,2}(\scN) \leq \mathcal{T}_{G,1}(\scN) &\leq  2\min\big(\mathcal{T}^c_G(\scN), \mathcal{T}_G(\scN)\big) \\ &\leq c_1\mathcal{T}_{G,1}(\scN) \leq c_2\mathcal{T}_{G,2}(\scN)^{\frac{1}{2}},
\end{align*}
where $c_1, c_2$ are constants independent of $\scN$. 
\end{Theorem}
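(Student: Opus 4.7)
The plan is to establish the chain as a sequence of four inequalities, leveraging two key reductions: since each weight $\bmu_s(i)\in[0,1]$, one has $t^2\le t$ for each summand, and since $\sS$ carries the discrete metric, $d(s_1,s_2)^2=\mathbf{1}[s_1\ne s_2]$, so that $\mathcal{T}_G(\mu)=\sum_{(v_i,v_j)\in\gE}\Pr(\vx(i)\ne\vx(j))$ for any $\mu\in\Gamma(\scN)$. The bridge between the transport-theoretic quantities on the left and the edgewise $\ell^1/\ell^2$ sums on the right is the coupling characterization of total variation: $\mathrm{TV}(\mu_i,\mu_j)=\inf_{\gamma\in\Gamma(\mu_i,\mu_j)}\Pr(X\ne Y)=\tfrac12\sum_{s\in\sS}|\bmu_s(i)-\bmu_s(j)|$.

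The first inequality $\mathcal{T}_{G,2}(\scN)\le\mathcal{T}_{G,1}(\scN)$ follows termwise from $(\bmu_s(i)-\bmu_s(j))^2\le|\bmu_s(i)-\bmu_s(j)|$, summed over $s$ and edges. For $\mathcal{T}_{G,1}(\scN)\le 2\mathcal{T}_G(\scN)$, I fix any $\mu\in\Gamma(\scN)$, apply the coupling lower bound edgewise to obtain $\mathcal{T}_G(\mu)\ge\sum_{(v_i,v_j)\in\gE}\mathrm{TV}(\mu_i,\mu_j)=\tfrac12\mathcal{T}_{G,1}(\scN)$, and then take the infimum over $\mu$. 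The analogous edgewise bound applied to each summand of $\sum_i\mathcal{T}_{T_i}(\scN)$, together with the covering property that every edge of $G$ lies in at least one $T_i$, yields $\mathcal{T}_{G,1}(\scN)\le 2\mathcal{T}^c_G(\scN)$, completing the middle inequality.

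The non-trivial direction is $2\min(\mathcal{T}^c_G(\scN),\mathcal{T}_G(\scN))\le c_1\mathcal{T}_{G,1}(\scN)$, which I handle via the covering quantity (which dominates the minimum). The crucial lemma is that on any spanning tree $T$ the lower bound above is sharp: root $T$ at an arbitrary vertex and construct a joint distribution Markovially by sampling the root from its marginal and, for each parent-child pair $(p,c)$, sampling $\vx(c)$ conditionally on $\vx(p)$ using an optimal TV-coupling of $\mu_p$ and $\mu_c$. A short induction on tree depth shows every node marginal is preserved, so this joint lies in $\Gamma(\scN)$ restricted to $T$ and attains $\mathcal{T}_T(\scN)=\tfrac12\mathcal{T}_{T,1}(\scN)$. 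Fixing once and for all a cover $\sT=\{T_1,\ldots,T_k\}$ of $G$ by spanning trees (which exists with $k$ depending only on $G$), each edge of $G$ appears in at most $k$ trees, so $\mathcal{T}^c_G(\scN)\le\sum_i\mathcal{T}_{T_i}(\scN)\le\tfrac{k}{2}\mathcal{T}_{G,1}(\scN)$; taking $c_1:=k$ closes the inequality. The final piece $c_1\mathcal{T}_{G,1}(\scN)\le c_2\mathcal{T}_{G,2}(\scN)^{1/2}$ is a one-line Cauchy--Schwarz over the $|\sS|\cdot|\gE|$ non-negative terms, giving $c_2:=c_1\sqrt{|\sS|\,|\gE|}$.

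The main obstacle is the tree-coupling construction in the third step: one must verify carefully that the Markov chain along $T$ preserves every node marginal (not merely the root's) and that the per-edge TV optima can be realized simultaneously. Acyclicity of $T$ is exactly what makes this possible, since the per-edge couplings can be chained without generating any consistency constraints that would force a trade-off between edges. This is also the conceptual reason the definition of $\mathcal{T}^c_G$ goes through spanning trees: on a graph with cycles simultaneous attainment would generally fail, so the approximation factor $k$ enters precisely as the price of covering $G$ by tree-like pieces.
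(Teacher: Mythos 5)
Your proposal is correct and follows essentially the same route as the paper: the discrete-metric Wasserstein/total-variation coupling identity, exact attainment of $\tfrac12\mathcal{T}_{T,1}$ on trees by chaining optimal pairwise couplings in a Markov/Bayesian-network fashion down a rooted spanning tree, a fixed spanning-tree cover to obtain $c_1$, and Cauchy--Schwarz for the final bound. (Incidentally, your Cauchy--Schwarz constant $\sqrt{|\sS|\,|\gE|}$ is the right one; the paper's $\sqrt{|\sS|n}$ appears to be a slip.)
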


The discussion and proof of a more general result can be found in \cref{app:pro}. The upshot is that if we expect $\mathcal{T}^c_G(\scN)$ to be small for some $\scN$, then so are necessarily $\mathcal{T}_{G,1}(\scN)$ and $\mathcal{T}_{G,2}(\scN)$. In view of computation cost, we mainly use $\mathcal{T}_{G,2}$ in the design of the regularization model in \cref{sec:mod}. It is discussed in many recent works regarding unfavorable side effects of signal smoothness (cf.\ \cref{sec:fur}). However, our point of view is that the smoothness of ``distributional graph signals'' might be a worthy alternative. This is conceptually different from classical Laplacian-based regularization. 

\subsection{Non-uniformity of distributional graph signals} \label{sec:non}

As discussed in \cref{sec:dis}, a base GNN model may output a matrix $\mX$ (in (\ref{eq:log})), with associated $\scN_{\mX} = \{\mu_{\mX,i}, 1\leq i\leq n\}$. For node classification problems, it is desirable that there is less ambiguity in the decision for each node so that one can pinpoint the correct class label. Mathematically, this requires that each $\mu_{\mX,i}$ deviates from the uniform distribution $\mathcal{U}(\sS)$ on the finite set of label classes $\sS$, measured by the Wasserstein metric $W(\mu_{\mX,i},\mathcal{U}(\sS))$. The following result is proved in \cref{app:pro}.

\begin{Lemma} \label{lem:faf}
For a fixed sequence of non-positive numbers $(a_i)_{1\leq i\leq n}$, there is a constant $C$ independent of $\mX$ such that
\begin{align} \label{eq:trx}
    \Tr(\mX^\top \mD \mX)+C \geq 2\sum_{1\leq i\leq n} a_iW(\mu_{\mX,i},\mathcal{U}(\sS))^2,
\end{align}
where $\mD$ is the diagonal matrix with diagonal entries $(a_i)_{1\leq i\leq n}$. Moreover, $ \Tr(\mX_o^\top \mD \mX_o)\leq \Tr(\mX^\top \mD \mX) \leq  \Tr(\mX_u^\top \mD \mX_u)$, where $\mX_o$ is a matrix with each row a one-hot vector and $\mX_u$ is the matrix with each entry $1/|\sS|$. 
\end{Lemma}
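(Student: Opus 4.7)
The plan is to expand $\Tr(\mX^\top\mD\mX) = \sum_{i=1}^n a_i\|\mX_{i,:}\|^2$ and then complete the square against the uniform weight vector. Since each row of $\mX$ sums to $1$, its inner product with $(1/|\sS|)\mathbf{1}$ equals $1/|\sS|$, which gives the Pythagoras-type identity $\|\mX_{i,:}\|^2 = \|\mX_{i,:} - (1/|\sS|)\mathbf{1}\|^2 + 1/|\sS|$. Choosing $C = -\sum_{i=1}^n a_i/|\sS|$, which is non-negative by hypothesis, one obtains
\[
\Tr(\mX^\top\mD\mX) + C = \sum_{i=1}^n a_i\,\bigl\|\mX_{i,:} - (1/|\sS|)\mathbf{1}\bigr\|^2.
\]

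Next I would exploit the special structure of the Wasserstein distance under the discrete metric on $\sS$. Since $d(s_1,s_2)^2 = d(s_1,s_2)\in\{0,1\}$, the cost $\int d^2\,\ud\gamma$ reduces to $\int d\,\ud\gamma$, and a standard Strassen-type argument (the diagonal mass of any coupling is bounded above by $\sum_s\min(\mX_{i,s},1/|\sS|)$, with equality achievable) identifies
\[
W(\mu_{\mX,i},\mathcal{U}(\sS))^2 = \tfrac{1}{2}\sum_{s\in\sS} \lvert \mX_{i,s} - 1/|\sS|\rvert.
\]
The assertion then reduces to the pointwise comparison $(x - 1/|\sS|)^2 \leq \lvert x - 1/|\sS|\rvert$, which holds for every $x\in[0,1]$ because $\lvert x-1/|\sS|\rvert \leq 1$. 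Summing over $s\in\sS$ gives $\|\mX_{i,:} - (1/|\sS|)\mathbf{1}\|^2 \leq 2\,W(\mu_{\mX,i},\mathcal{U}(\sS))^2$; multiplying by the non-positive $a_i$ reverses the inequality, and summing over $i$ combined with the identity above delivers the first assertion.

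For the moreover part I would observe that on the probability simplex, the convex map $p\mapsto\sum_s p(s)^2$ attains its minimum $1/|\sS|$ at the barycenter (the uniform vector) and its maximum $1$ at the vertices (the one-hot vectors). Since each $a_i\leq 0$, the term $a_i\|\mX_{i,:}\|^2$ lies in $[a_i,\,a_i/|\sS|]$, and the endpoints are realized row-by-row precisely by one-hot and uniform rows. Summing yields $\Tr(\mX_o^\top\mD\mX_o) = \sum_i a_i \leq \Tr(\mX^\top\mD\mX) \leq \sum_i a_i/|\sS| = \Tr(\mX_u^\top\mD\mX_u)$.

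The only substantive step is the collapse of $W^2$ to the total variation quantity under the discrete metric; the remainder consists of a completion-of-squares identity, an elementary scalar inequality, and careful sign bookkeeping from $a_i\leq 0$.
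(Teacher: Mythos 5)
Your proof is correct and takes essentially the same route as the paper's: the identity $W(\mu_{\mX,i},\mathcal{U}(\sS))^2=\tfrac{1}{2}\sum_{s}\bigl|\mX_{i,s}-1/|\sS|\bigr|$ you invoke is exactly the paper's Lemma~\ref{lem:wmn} (whose achievability half the paper proves by an explicit inductive coupling construction, which you only sketch as ``standard''), and your completion-of-squares plus the scalar bound $(x-1/|\sS|)^2\le\bigl|x-1/|\sS|\bigr|$ with sign reversal from $a_i\le 0$ is the same algebra as the paper's direct expansion, producing the same constant $C=-\Tr(\mD)/|\sS|$. The ``moreover'' part likewise matches the paper's argument via $1/|\sS|\le\sum_j\mX_{i,j}^2\le 1$ on the simplex.
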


As we want each $\mu_{\mX, i}$ to deviate from the uniform distribution, the right-hand side of \cref{eq:trx} should be made small (as negative as possible). This is ensured if the proxy $\Tr(\mX^\top \mD \mX)$, which is easy to compute, is small. Moreover, we notice that $\mX_u$ (resp.\ $\mX_o$) corresponds to uniform (resp.\ $\delta$) marginal distributions. The second half of the statement suggests that minimizing $\Tr(\mX^\top \mD \mX)$ may drive marginals, i.e., rows of $\mX$, to comply with non-uniformity. Numerical experiments in \cref{sec:ana} support that this proxy works well. We use this term in conjunction with $\mathcal{T}_{G,2}$ introduced in \cref{sec:tot} in our proposed regularized model. 

Regarding the numbers $(a_i)_{1\leq i\leq n}$, we propose to assign a larger number to a node with a larger degree. The reason is that we would like to promote the non-uniformity of large degree nodes, as each such node can influence more nodes during message passing.  

\subsection{The loss function with regularization} \label{sec:mod}

Suppose we are given a base GNN model, denoted by $\mathcal{M}$. Let $\mF$ be the matrix of input feature vectors and $\Theta$ be the parameter space for $\mathcal{M}$. Assume $\mathcal{M}$ has a loss function $\mathcal{L}_{\mathcal{M}}$, and the model is set to solve the optimization problem $\min_{\theta \in \Theta} \mathcal{L}_{\mathcal{M}}(\mF, \theta)$.

Consider the steps given in \cref{eq:log}. We have $\mX = \phi(\mO)$ viewed as a matrix of probability weights. For regularization, we introduce another loss $\mathcal{L}_0$ to supplement $\mathcal{L}_{\mathcal{M}}$. \Cref{sec:tot,sec:non} suggest that $\mathcal{L}_0$ should consist of two parts $\mathcal{L}_0 = \mathcal{L}_1 + \mathcal{L}_2$. The loss $\mathcal{L}_1$ (cf.\ Definition~\ref{defn:giv}) is related to the smoothness of the distributional graph signal $\mX$ and takes the form $\mathcal{L}_1(\mX) = \Tr(\mX^\top \mL_{\gG} \mX) = \Tr(\mX^\top (\mD_{\gG} - \mA_{\gG}) \mX)$. 
On the other hand, $\mathcal{L}_2$ prevents the distributional signal from being uniform and can be explicitly expressed as $\mathcal{L}_2(\mX) = \Tr(\mX^\top \mD \mX)$ for a suitably chosen negative semi-definite diagonal matrix $\mD$ (cf.\ \cref{lem:faf}). Summing up $\mathcal{L}_1$ and $\mathcal{L}_2$, we have $\mathcal{L}_0(\mX) = \Tr\big(\mX^\top (\mD_{\gG}-\mA_{\gG}+\mD) \mX\big)$. In our experiments, we take $\mD = \mI_n - \mD_{\gG}$ and obtain the easily computable loss 
\begin{align*}
    \mathcal{L}_0(\mX) = \Tr\big(\mX^\top (\mI_n-\mA_{\gG}) \mX\big).
\end{align*}
In summary, if we express the output of the second last layer of $\mathcal{M}$ as $\mO = \psi(\mF,\theta)$ of input $\mF$ and model parameters $\theta$, then the regularized model R-$\mathcal{M}$ of $\mathcal{M}$ solves the optimization:
\begin{align} \label{eq:min}
    \min_{\theta \in \Theta} \mathcal{L}_{\text{R-}\mathcal{M}}(\mF,\theta) = \min_{\theta \in \Theta} \mathcal{L}_{\mathcal{M}}(\mF,\theta) + \eta\cdot \mathcal{L}_0\big(\phi\circ \psi(\mF,\theta)\big), 
\end{align}
where the coefficient $\eta$ is a tunable hyperparameter and $\circ$ denotes function composition. In the regularized model R-$\mathcal{M}$, we do not make any other changes to the based model $\mathcal{M}$ apart from using the new loss function $\mathcal{L}_{\text{R-}\mathcal{M}}$ during training. A schematic illustration is shown in \cref{fig:rm}~(a). 

\begin{figure}[!htb]
\centering
\includegraphics[width=1\columnwidth, trim=0cm 0cm 0cm 0cm,clip]{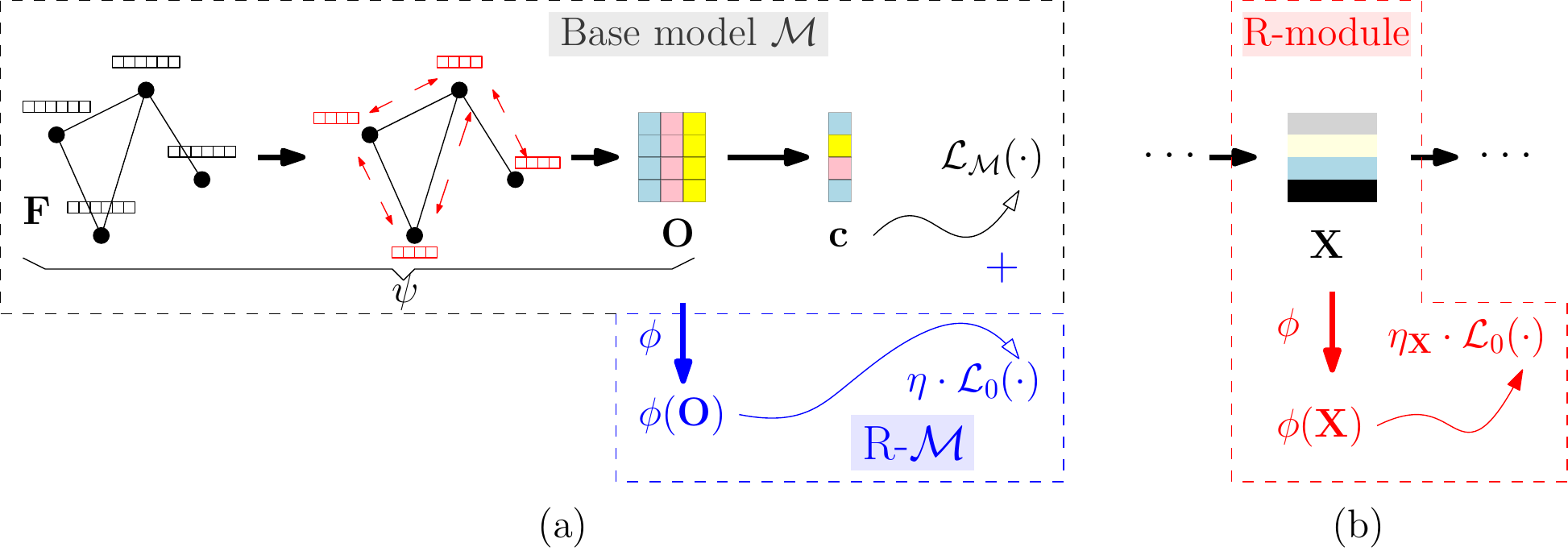}
\caption{This figure illustrates in (a) how R-$\mathcal{M}$ is constructed upon the base model $\mathcal{M}$, and in (b) an R-module that can be used at different places in GNN models.} \label{fig:rm}
\end{figure} 

\subsection{Further discussions} \label{sec:fur}
Most regularization methods introduce a penalty to supplement the base model loss. For example, in the recent work, P-reg \cite{Yan21} proposes to apply a propagation matrix (e.g., the normalized adjacency matrix) to the output features. The model penalizes large discrepancies, measured by a metric such as squared error distance, between the original and transformed features. In LEreg \cite{Ma21}, intra-energy and inter-energy losses are introduced. Both are variants of the total variation \cref{eq:tvv}. The novelty lies in introducing a ``merged graph'' with each node representing a whole label class. All these works are related to the smoothness prior used in GSP theory (\cref{sec:gsp}). In this paper, we take a fundamentally different view of graph signals by \emph{treating a distribution as a signal}. This allows a \emph{principled} and \emph{straightforward} adaption of existing GSP approaches, as long as we have suitable notions of the signal's total variation. Moreover, non-uniformity does not have a counterpart for ordinary graph signals, which are essentially delta distributions.      

Compared to GSP as well as other regularization methods such as P-reg and LEreg, a salient feature of our method is that the matrix $\mI_n-\mA_{\gG}$ in $\mathcal{L}_0$ is in general not positive semi-definite. Therefore, the infimum of $\mathcal{L}_0(\mX)$ can be $-\infty$ if the domain of the entries of $\mX$ is unbounded. This is another reason why restricting to distributional graph signals is essential. 

The proposed regularization is primarily for node classification as distributional graph signals can be interpreted as the likelihoods of class labels. However, the method can be extended to other tasks through an \emph{R-module} (\cref{fig:rm}~(b)) that consists of the following steps: 
\begin{itemize}
    \item Apply $\phi$ (e.g., $\softmax$) that turns a feature $\mX$ into a matrix of probability weights $\phi(\mX)$.
    \item Plug $\phi(\mX)$ in the loss $\eta_{\mX}\mathcal{L}_0$ with tunable coefficient $\eta_{\mX}$.  
\end{itemize}
Given a base model, multiple R-modules can be inserted at different places of the model pipeline, and all the losses are combined with the original loss of the model. The insight is that useful node features for graph learning tasks should be inherently associated with smooth and non-uniform distributional graph signals. We demonstrate this approach with link prediction and graph classification tasks in \cref{sec:lin}. 

\emph{More on related works.}
Though it is argued in \cite{Yan21} and supported by our evidence in \cref{sec:step} that Laplacian regularization may have its drawbacks, it has achieved a certain degree of success in earlier works \cite{Zhu03, Zho04, And07}. These methods are based on the insight that neighboring nodes are likely to have the same labels. In the language of GSP, they intend to leverage the smoothness of the class label graph signal. 

On the other hand, though graph signal smoothness has been fundamental in both GSP and GNNs, the negative effect of over-smoothing has also been examined \cite{NT19, Oon20}) and models are proposed to alleviate it. For example, apart from the more recent works such as \cite{Yan21, Ma21} that have already been discussed in detail, PairNorm proposed in \cite{Zha20} encourages the similarity between connected nodes, and at the same time adds a negative term based on distances between disconnected pairs. MADReg in \cite{Che20} proposes to use step size limits to make the graph nodes receive less interference noise. In \cite{Ron20}, randomly dropping nodes is proposed to reduce the convergence speed of over-smoothing. Adding skip connections is also introduced in \cite{Li19, Lua19}. 

In our paper, we take a different point of view by not considering the smoothness of ordinary graph signals. Instead, we speculate that properties such as smoothness and non-uniformity of distributional graph signals may play important roles. Moreover, requiring a distributional graph signal to satisfy non-uniformity partially prevents the unfavorable situation that many connected nodes have similar marginal distributions that are approximately uniform. This can be viewed as a countermeasure to over-smoothing intrinsically contained in our approach.  

\section{Experiments} \label{sec:exp}

In this section, we verify the empirical performance of our proposed regularization method based on both Euclidean and hyperbolic GNN models. We consider node classification under both transductive and inductive settings. 
The datasets used include Cora, Citeseer, Pubmed, (Amazon) Photo, CS, Airport, Disease, and PPI \cite{Sen08, Nam12, Che18, Shc18, Fey19, Cha19, Szk16}. Their statistics are given in \cref{app:sta}. No further tuning of the base model is needed. We compare with base models and benchmarks in \cref{sec:sem}, and with variants of our model in \cref{sec:ana}. Other graph learning tasks are discussed in \cref{sec:lin}.

\subsection{Performance on node classification problems} \label{sec:sem}

\subsubsection{Transductive learning models} \label{sec:tra}

In this subsection, we consider transductive tasks, in which there is a single graph containing both labeled training nodes and test nodes. The base models include GCN \cite{Def16}, GAT \cite{Vel18}, GraphSAGE \cite{Ham17} (abbreviated as SAGE) and GraphCON \cite{Rus22} (abbreviated as CON). 

In addition to comparison with base models, we use models having a similar structure to our approach as benchmarks. More specifically, we implement different versions of P-reg in \cite{Yan21}, denoted by P-GCN and P-GAT (based on popular GNN models GCN, GAT), and different versions of LEReg \cite{Ma21}, denoted by L-GCN and L-GAT. The parameters are tuned as suggested by the respective papers. We also have the Laplacian method, denoted by LAP, in which the final class label $\vc$ is used in the loss $\mathcal{L}_0$. For a fair comparison, tests are performed under the same hardware and software environment. Similar considerations are applied in \cref{sec:hyp} and \ref{sec:ind}.

We also compare with BVAT \cite{Den19}, GAM \cite{Str19}, and GraphAT \cite{Fen21} (the results are taken from literatures though they are unreported for Photo and CS). The test accuracies (in $\%$) are shown in \cref{tab:cor}. In each row, best performers are highlighted in \blue{blue} and \red{red} for our approach and benchmarks, respectively. An \underline{underlined} entry means no noticeable performance improvement over the base model is observed. In general, we see that our proposed regularized models improve upon their respective base models with significant performance gain in many cases. Moreover, our method can match up with or even outperform many benchmarks.

\begin{table*}[!ht]
\caption{Transductive learning models.}
\label{tab:cor}
\begin{center}
\scalebox{1.1}{\begin{tabular}{ @{\hspace{0.4\tabcolsep}} c 
@{\hspace{2\tabcolsep}} c @{\hspace{0.4\tabcolsep}} c @{\hspace{2\tabcolsep}} c @{\hspace{0.4\tabcolsep}} c @{\hspace{2\tabcolsep}} c @{\hspace{0.4\tabcolsep}} c @{\hspace{2\tabcolsep}} c @{\hspace{0.4\tabcolsep}} c @{\hspace{3\tabcolsep}} c  @{\hspace{0.4\tabcolsep}} c @{\hspace{0.4\tabcolsep}} c  
@{\hspace{0.4\tabcolsep}} c @{\hspace{0.4\tabcolsep}} c @{\hspace{0.4\tabcolsep}} c @{\hspace{0.4\tabcolsep}} c @{\hspace{0.4\tabcolsep}} c @{\hspace{0.4\tabcolsep}} c @{\hspace{0.4\tabcolsep}} c @{\hspace{0.4\tabcolsep}} c @{\hspace{0.4\tabcolsep}} } 
  \toprule
   &\multicolumn{8}{c}{{\small Comparison w/ base models}} &  \multicolumn{8}{c}{{\small Benchmarks}}\\
  \midrule
 {\tiny Dataset} &  {\tiny GCN} & {\tiny R-GCN} & {\tiny GAT} & {\tiny R-GAT} & {\tiny SAGE} & {\tiny R-SAGE} & {\tiny CON} & {\tiny R-CON} & {\tiny LAP} & {\tiny P-GCN} & {\tiny P-GAT} & {\tiny L-GCN} & {\tiny L-GAT} & {\tiny BVAT} & {\tiny GAM} & {\tiny GraphAT}\\
  \midrule
{\tiny Cora} & $81.0\atop \pm 1.07$ & $83.4\atop \pm 0.24$ & $83.1 \atop \pm 0.61$ & $83.7\atop\pm 0.40$ & $81.8\atop \pm 0.65$ & $82.6\atop \pm 0.31$ & $83.9\atop\pm 1.12$ & \blue{$\bm{84.4}\atop\pm 1.52$} & $80.7\atop \pm 0.89$ & $80.4\atop\pm 0.61$ & $81.2 \atop\pm 0.65$ & $81.5\atop\pm 0.45$ & $82.6\atop\pm 0.75$ & \red{$\bm{83.4}\atop \pm 1.01$} & $82.3\atop\pm 0.48$ & $82.5\atop\pm 0.60$\\
 \midrule
{\tiny Citeseer} & $70.6\atop \pm 0.48$ & $73.6\atop\pm 0.70$ & $68.7\atop \pm 0.69$ & $70.7\atop \pm 1.23$ & $70.6\atop \pm 0.47$ & $72.0\atop \pm 0.28$ & $73.5\atop \pm 1.27$ & \blue{$\bm{74.6}\atop \pm 1.16$} & $70.7\atop \pm 0.64$ & $70.4\atop \pm 0.87$ & $70.3\atop \pm 0.76$ & $71.1\atop \pm 0.59$ & $69.0\atop \pm 0.94$ & \red{$\bm{73.9}\atop \pm 0.54$} & $72.7\atop \pm 0.62$ & $73.5\atop \pm 0.38$  \\
 \midrule 
{\tiny Pubmed} & $79.2\atop \pm 0.29$ & $79.9\atop\pm 0.40$ & $76.9\atop \pm 0.77$ & $\underline{76.9}\atop \pm 0.77$ & $77.8\atop \pm 0.31$ & $78.7\atop \pm 0.53$ & $79.1\atop \pm 1.16$ & \blue{$\bm{80.0} \atop\pm 1.18$} & $79.1\atop \pm 0.73$ & $79.0\atop \pm 0.31$ & $76.9\atop \pm 0.95$ & $78.7\atop \pm 0.25$ & $73.5\atop \pm 0.47$ & $78.0\atop \pm 0.84$ & \red{$\bm{79.6}\atop \pm 0.63$} & $79.1\atop \pm 0.20$ \\
\midrule
{\tiny Photo} & $91.1\atop \pm 0.55$ & $91.4\atop\pm 0.49$ & $91.4\atop \pm 0.41$ & \blue{$\bm{92.5}\atop \pm 0.42$} & $90.0\atop \pm 0.36$ & $92.2\atop \pm 0.42$ & $90.3\atop \pm 0.64$ & $91.1\atop \pm 0.43$ & $91.1\atop \pm 0.43$ & $91.6\atop \pm 0.20$ & $91.6\atop \pm 1.10$ & $91.5\atop \pm 0.28$ & \red{$\bm{92.0}\atop \pm 0.45$} & - & - & - \\
\midrule
{\tiny CS} & $90.5\atop \pm 0.14$ & $91.2\atop\pm 0.65$ & $90.8\atop \pm 0.40$ & $91.7\atop \pm 0.25$ & $90.5\atop \pm 0.08$ & \blue{$\bm{91.8}\atop \pm 0.10$} & $90.5\atop \pm 0.44$ & $90.8\atop \pm 0.36$ & $90.6\atop \pm 0.11$ & $90.5\atop \pm 0.14$ & $90.8\atop \pm 0.40$ & $90.5\atop \pm 0.10$ & \red{$\bm{90.8}\atop \pm 0.34$} & - & - & - \\
 \bottomrule
\end{tabular}} 
\end{center}
\end{table*}

\subsubsection{Hyperbolic models} \label{sec:hyp}
Base models considered in \cref{sec:tra} generate embedding of nodes in Euclidean spaces. However, it is argued in works such as \cite{Cha19} that for certain datasets such as Airport and Disease, one should consider embedding nodes in hyperbolic spaces and perform feature aggregation in the tangent spaces of hyperbolic spaces. Such a consideration is plausible as certain graphs are inherently hyperbolic (measured by $\delta$-hyperbolicity, see \cite{Bri99}).  

In this subsection for Airport and Disease datasets, we use hyperbolic versions of their Euclidean counterparts HGCN \cite{Cha19}, and HGAT \cite{Gul19} as base models. We also consider the interactive model GIL that combines both Euclidean and hyperbolic approaches \cite{Zhu20}. The comparison results are shown in \cref{tab:air}. Again, we see a general improvement by using the proposed regularization, which yields performance comparable with benchmarks. 

\begin{table*}[!ht]
\caption{Hyperbolic models.}
\label{tab:air}
\begin{center}
\scalebox{1.2}{\begin{tabular}{ @{\hspace{0.4\tabcolsep}} c 
@{\hspace{1.5\tabcolsep}} c @{\hspace{0.4\tabcolsep}} c @{\hspace{1.5\tabcolsep}} c @{\hspace{0.4\tabcolsep}} c @{\hspace{1.5\tabcolsep}} c  @{\hspace{0.4\tabcolsep}} c @{\hspace{3\tabcolsep}} c  
@{\hspace{0.4\tabcolsep}} c @{\hspace{0.4\tabcolsep}} c @{\hspace{0.4\tabcolsep}} c 
@{\hspace{0.4\tabcolsep}} c} 
  \toprule
   &\multicolumn{6}{c}{{\small Comparison w/ base models}} &  \multicolumn{5}{c}{{\small Benchmarks}}\\
  \midrule
  {\small Dataset} & {\tiny HGCN} & {\tiny R-HGCN} & {\tiny HGAT} & {\tiny R-HGAT} & {\tiny GIL} & {\tiny R-GIL} & {\tiny LAP} & {\tiny P-HGCN} & {\tiny P-HGAT} & {\tiny L-HGCN} & {\tiny L-HGAT} \\
 \midrule
 {\small Airport} & $88.8\atop \pm 1.38$ & $89.3\atop \pm 1.42$ & $89.9\atop \pm 0.83$ & $91.0\atop \pm 1.63$ & $90.3\atop \pm 2.12$ & \blue{$\bm{91.4}\atop \pm 1.39$} & $88.8\atop \pm 1.38$ & $88.9\atop \pm 1.42$ & \red{$\bm{90.0}\atop \pm 1.11$} & $88.7\atop \pm 1.57$ & $89.7\atop \pm 1.14$ \\ 
 \midrule
 {\small Disease} & $91.3\atop \pm 1.04$ & \blue{$\bm{92.1}\atop \pm 1.09$} & $90.4\atop \pm 1.24$ & $91.3\atop \pm 1.31$ & $91.2\atop \pm 1.52$ & $\underline{91.2}\atop \pm 1.52$ & $92.2\atop \pm 1.01$ & $92.1\atop \pm 1.07$ & $90.2\atop \pm 2.14$ & \red{$\bm{92.2}\atop \pm 0.95$} & $90.4\atop \pm 1.53$ \\ 
 \bottomrule
\end{tabular}} 
\end{center}
\end{table*}

\subsubsection{Inductive learning models} \label{sec:ind}

In contrast with transductive learning, inductive learning requires one to deal with unseen data outside the training set. For example, in the PPI dataset, different graphs correspond to different human tissues and we test two out of 24 graphs that are unseen during training. Though the citation datasets Cora, Citeseer, and Pubmed are for transductive learning, we modify the datasets following \cite{Mis21} that use the induced subgraph of training nodes during training. During validation and testing, we use induced subgraphs of validation nodes and testing nodes with training nodes, respectively. The nodes that are being predicted are unseen during training.

The base models are GCN, GAT, as well as GraphSAGE that is primarily designed for inductive learning tasks. The performance comparison between the base models and their regularized versions is shown in \cref{tab:cc}. The conclusion agrees with those observed in the previous subsections.

\begin{table*}[!ht]
\caption{Inductive learning models}
\label{tab:cc}
\begin{center}
\scalebox{1.2}{\begin{tabular}{ @{\hspace{0.4\tabcolsep}} c 
@{\hspace{1.5\tabcolsep}} c @{\hspace{0.4\tabcolsep}} c @{\hspace{1.5\tabcolsep}} c @{\hspace{0.4\tabcolsep}} c @{\hspace{1.5\tabcolsep}} c  @{\hspace{0.4\tabcolsep}} c @{\hspace{3\tabcolsep}} c  
@{\hspace{0.4\tabcolsep}} c @{\hspace{0.4\tabcolsep}} c @{\hspace{0.4\tabcolsep}} c 
@{\hspace{0.4\tabcolsep}} c} 
  \toprule
   &\multicolumn{6}{c}{{\small Comparison w/ base models}} &  \multicolumn{5}{c}{{\small Benchmarks}}\\
  \midrule
  {\small Dataset} & {\tiny GCN} & {\tiny R-GCN} & {\tiny GAT} & {\tiny R-GAT} & {\tiny SAGE} & {\tiny R-SAGE} & {\tiny LAP} & {\tiny P-GCN} & {\tiny P-GAT} & {\tiny L-GCN} & {\tiny L-GAT} \\
 \midrule
 {\small Cora} & $72.7\atop \pm 1.35$ & \blue{$\bm{73.2}\atop \pm 1.20$} & $70.6\atop \pm 1.89$ & $72.0\atop \pm 1.77$ & $72.8\atop \pm 0.77$ & $73.1\atop \pm 1.13$ & $72.8\atop \pm 1.22$ & $73.4\atop \pm 0.75$ & $72.8 \atop \pm 1.34$ & \red{$\bm{73.6}\atop \pm 1.15$} & $72.3\atop \pm 0.80$ \\ 
 \midrule
 {\small Citeseer} & $65.5\atop \pm 1.33$ & \blue{$\bm{66.4}\atop \pm 1.27$} & $63.4\atop \pm 2.57$ & $63.8\atop \pm 2.18$ & $64.6\atop \pm 1.31$ & $65.0\atop \pm 1.15$ & \red{$\bm{66.1}\atop \pm 0.98$} & $65.7\atop \pm 1.92$ & $63.0\atop \pm 1.24$ & $65.9\atop \pm 2.55$ & $63.5\atop \pm 1.80$ \\ 
 \midrule
 {\small Pubmed} & $73.2\atop \pm 1.57$ & \blue{$\bm{73.8}\atop \pm 0.95$} & $72.8\atop \pm 1.01$ & $73.8\atop \pm 1.15$ & $73.3\atop \pm 0.90$ & $73.5\atop \pm 0.81$ & $73.5\atop \pm 1.52$ & \red{$\bm{73.5}\atop \pm 1.19$} & $73.4\atop \pm 0.85$ & $73.4\atop \pm 1.49$ & $72.9\atop \pm 1.42$ \\ 
 \midrule
 {\small PPI} & $70.2\atop \pm 2.06$ & $71.3\atop \pm 0.51$ & $97.6\atop \pm 0.54$ & \blue{$\bm{98.2}\atop \pm 0.11$} & $72.6\atop \pm 1.37$ & $\underline{72.6}\atop \pm 1.37$ & $70.8\atop \pm 0.70$ & $69.7\atop \pm 2.04$ & $98.0\atop \pm 0.23$ & $69.8\atop \pm 1.42$ & \red{$\bm{98.2}\atop \pm 0.12$} \\ 
 \bottomrule
\end{tabular}} 
\end{center}
\end{table*}


\subsection{Analysis and ablation studies} \label{sec:ana}
We analyze our model with R-GCN, which has significant gain as compared with its base model (cf.\ \cref{tab:cor}). The graph $\gG_0=(\gV_0,\gE_0)$ is the main connected component of Cora. Specifically, we want to study whether R-GCN indeed generates distributional graph signals with desired properties. For smoothness (cf.\ \cref{sec:tot}), we have interpreted earlier $\phi(\mO)$ of the output features of $\mO$ as weights of marginal distributions. For the column $\phi(\mO)_{:,1}$, we take the subvector indexed by $\gV_0$, then normalize and denote it by $\vx$ (analysis of other columns are in \cref{sec:add}). We compute its Fourier transform $\hat{\vx}$ and inspect its high-frequency components. We show $\hat{\vx}$ for different epochs in \cref{fig:rcora}. We also show the spectral plots for the last epoch (epoch 200) of GCN and of the signal of (normalized) ground truth labels. We see a clear shrinkage of high-frequency components for R-GCN (epoch 200). 

\begin{figure}[!htb]
\centering
\includegraphics[width=0.48\columnwidth, trim=0cm 0cm 0cm 0cm,clip]{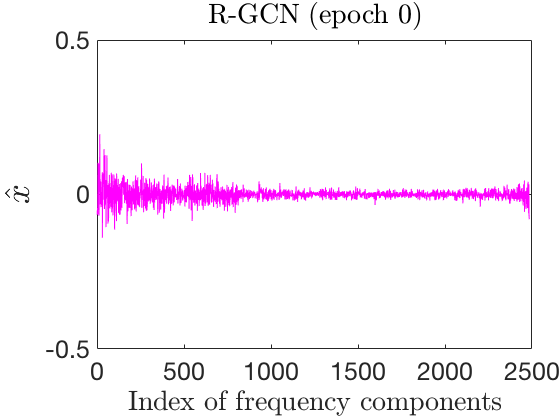}
\includegraphics[width=0.48\columnwidth, trim=0cm 0cm 0cm 0cm,clip]{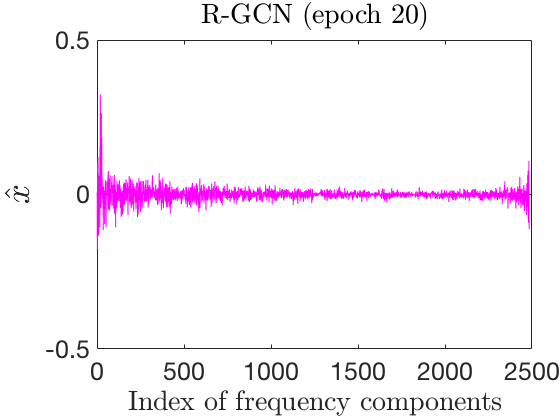}
\includegraphics[width=0.48\columnwidth, trim=0cm 0cm 0cm 0cm,clip]{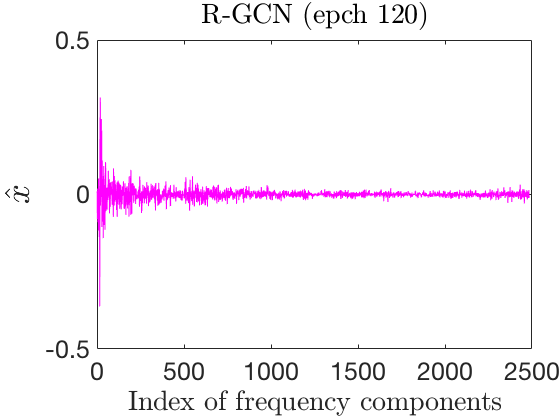}
\includegraphics[width=0.48\columnwidth, trim=0cm 0cm 0cm 0cm,clip]{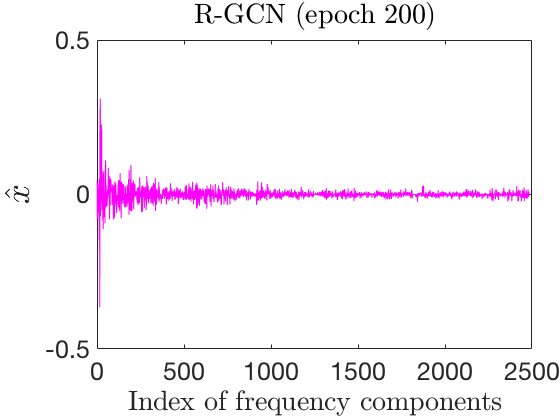}
\includegraphics[width=0.48\columnwidth, trim=0cm 0cm 0cm 0cm,clip]{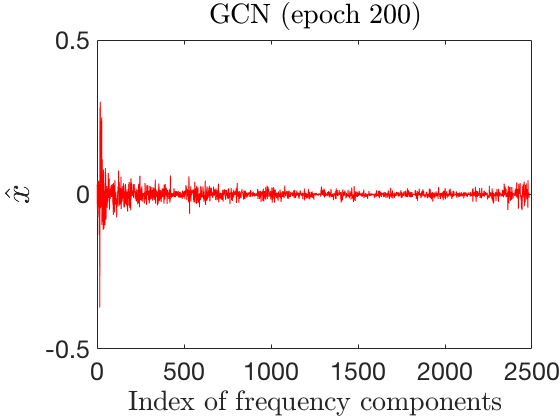}
\includegraphics[width=0.48\columnwidth, trim=0cm 0cm 0cm 0cm,clip]{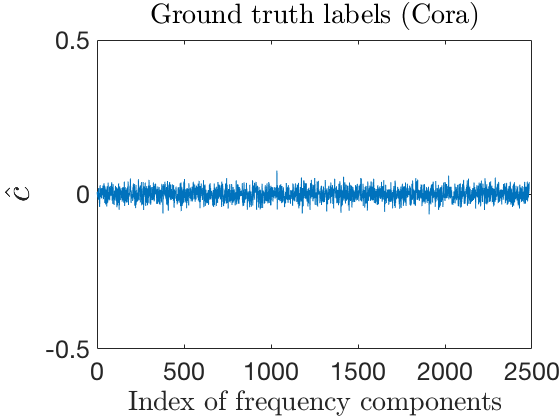}
\caption{Spectral plots of (normalized) signals of probability weights and ground truth labels.}\label{fig:rcora}
\end{figure} 

For non-uniformity (cf.\ \cref{sec:non}), we collect in the set $\sK_{\text{R-GCN}}$ (resp. $\sK_{\text{GCN}}$) the probability weights for all the label classes and nodes, i.e., $18956$ entries of $\phi(\mO)$ for epoch 200 of R-GCN (resp.\ GCN). Non-uniformity suggests that $\sK_{\text{R-GCN}}$ contains less values near the average $1/7$ and more values near $1$, as compared with $\sK_{\text{GCN}}$. To verify, we compare $|\sK_{\text{R-GCN}}\cap [1/7-\epsilon_1, 1/7+\epsilon_1]|$ with $|\sK_{\text{GCN}}\cap [1/7-\epsilon_1, 1/7+\epsilon_1]|$, and $|\sK_{\text{R-GCN}}\cap [1-\epsilon_2, 1]|$ with $|\sK_{\text{GCN}}\cap [1-\epsilon_2, 1]|$. The plots for different choices of (small) $\epsilon_1, \epsilon_2$ are shown in \cref{fig:stat}. The results agree with our speculation. 

\begin{figure}[!htb]
\centering
\includegraphics[width=0.7\columnwidth, trim=0cm 0cm 0cm 0cm,clip]{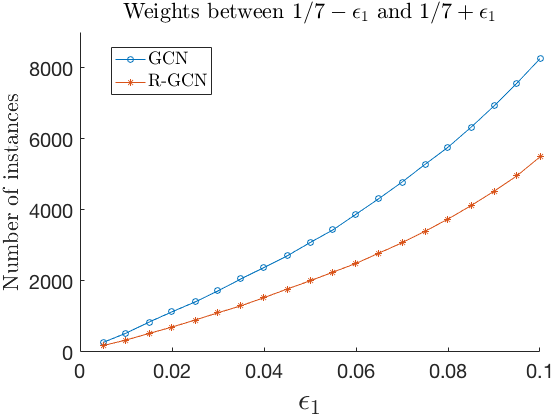}
\includegraphics[width=0.7\columnwidth, trim=0cm 0cm 0cm 0cm,clip]{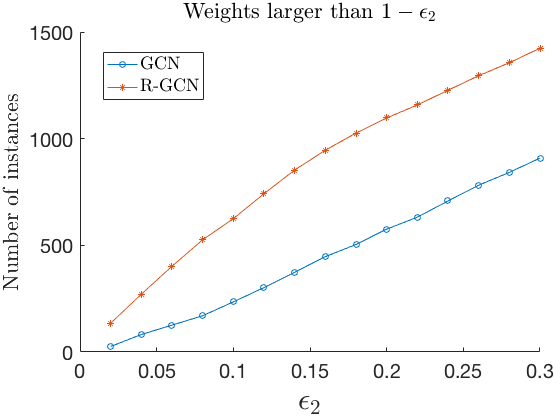}
\caption{Number of instances of output probability weights within a given range.}\label{fig:stat}
\end{figure} 

Next, we conduct experiments on different R-GCN variants to validate the effectiveness of the components of our model. Recall that the newly introduced loss $\mathcal{L}_0$ has two components $\mathcal{L}_1$ and $\mathcal{L}_2$, accounting for smoothness and non-uniformity of distributional graph signals. We introduce modifications of R-GCN as follows: (a) R1-GCN: we only retain $\mathcal{L}_1$ in the loss; (b) R2-GCN: we only retain $\mathcal{L}_2$ in the loss, and (c) R3-GCN: we input $\mO$ directly in the loss without applying $\phi$. In this case, the entries of $\mO$ are not bounded, we only keep $\mathcal{L}_1$, which is positive semi-definite, in the loss (cf.\ \cref{sec:mod}). The results are shown in \cref{tab:pub}. We see that R-GCN remains the most effective as compared with its variants. This suggests that each component of R-GCN plays a useful role.    

\begin{table}[!ht]
\caption{Ablation study}
\label{tab:pub}
\begin{center}
\scalebox{1.1}{\begin{tabular}{  c| c c c } 
  \toprule
  & Cora & Citeseer & Pubmed\\
 \midrule
 R-GCN & $\bm{83.4}\pm 0.24$ & $\bm{73.6}\pm 0.70$ & $\bm{79.9}\pm 0.40$\\
 \midrule
 R1-GCN & $80.8\pm 0.69$ & $70.7\pm 0.67$ & $79.2\pm 0.56$\\ 
 R2-GCN & $82.0\pm 0.65$ & $72.8\pm 0.92$ & $79.5\pm 0.61$\\ 
 R3-GCN & $69.6\pm 1.75$ & $64.5\pm 1.79$ & $75.7\pm 0.57$\\  
 \bottomrule 
\end{tabular}} 
\end{center}
\end{table}

\section{Conclusion} \label{sec:con}
In this paper, we introduce the notion of distributional graph signals and total variations that measure the smoothness of such signals. Based on this and the concept of non-uniformity, we propose a regularization scheme that can be applied directly to enhance the performance of many GNN models. The method is analogous to the regularization method of a smooth signal prior in GSP. 

\appendices

\section{Data statistics} \label{app:sta}
In \cref{tab:ds}, we provide statistics of datasets used in \cref{sec:exp}.

\begin{table}[!ht]
\caption{Dataset statistics}
\label{tab:ds}
\begin{center}
\scalebox{1}{\begin{tabular}{  c c c c c } 
  \toprule
  Dataset & Nodes & Edges & Classes & Features \\
 \midrule
  Cora & 2708 & 5429 & 7 & 1433 \\
 \midrule 
  Citeseer & 3327 & 4732 & 6 & 3703 \\
  \midrule 
  Pubmed & 19717 & 44338 & 3 & 500 \\
  \midrule
  Photo & 7487 & 119043 & 8 & 745 \\
  \midrule
  CS & 18333 & 81894 & 15 & 6805 \\
  \midrule
  Disease & 1044 & 1043 & 2 & 1000 \\
  \midrule
  Airport & 3188 & 18631 & 4 & 4 \\
  \midrule
  PPI & Ave. 2373 & Ave. 34171 & 121 & 50 \\
 \bottomrule
\end{tabular}} 
\end{center}
\end{table}

\section{Link prediction and graph classification} \label{sec:lin}

We follow \cref{sec:fur} to apply the proposed regularization method to link prediction and graph classification. 

For link prediction \cite{Lib07, Che18}, we want to predict whether two nodes in a network are likely to have a link. Suppose a base GNN model for link prediction is given. We insert an R-module (\cref{fig:rm}~(b)) to the last node feature matrix in the model pipeline. The loss of the R-module is added directly to the original loss. We test on Cora and Airport datasets with base models: GCN, GAT, HGCN, and GIL. The results are shown in \cref{tab:lin}. Except for HGCN, we see that the regularization can enhance the performance of the base models.  

\begin{table*}[!ht]
\caption{Link prediction. The best performer is highlighted in blue.}
\label{tab:lin}
\begin{center}
\scalebox{1.1}{\begin{tabular}{  c c c  c c  c c  c c  } 
  \toprule
  & {\small GCN} & {\small R-GCN} & {\small GAT} & {\small R-GAT} & {\small HGCN} & {\small R-HGCN} & {\small GIL} & {\small R-GIL} \\
  \midrule
{\small Cora} & $88.1 \atop \pm 1.08$ & $88.4 \atop \pm 1.20$ & $88.8 \atop \pm 1.41$ & $89.1 \atop \pm 1.26$ & $93.3 \atop \pm 0.43$ & $\underline{93.3} \atop \pm 0.43$ & $91.1 \atop \pm 13.4$ & \blue{$\bm{94.6} \atop \pm 7.99$} \\
 \midrule
{\small Airport} & $93.0 \atop \pm 0.44$ & $93.2 \atop \pm 0.37$ & $93.6 \atop \pm 0.61$ & $\underline{93.6} \atop \pm 0.61$ & $97.6 \atop \pm 0.13$ & $\underline{97.6} \atop \pm 0.13$ & $97.3 \atop \pm 3.73$ & \blue{$\bm{97.7} \atop \pm 3.27$} \\
 \bottomrule
\end{tabular}} 
\end{center}
\end{table*}

We next consider graph classification. For such a task, we want to determine the class label of each graph in a dataset containing multiple graphs. Graph classification is closely related to the theoretical problem of graph isomorphism test, and GIN \cite{Xu19} is a GNN model that explores such a connection. We use GIN and the variant GIN2 with the learnable importance of the target node compared to its neighbors, as base models. Similarly to link prediction, we insert an R-module (\cref{fig:rm}~(b)) to the last node feature matrix in the model pipeline. We test with bioinformatics datasets MUTAG and PTC \cite{Yan15}, following the protocol described in \cite{Xu19} and report the 10-fold cross validation accuracy.  Comparison results are shown in \cref{tab:cla}. The regularization indeed works for both models. 

\begin{table}[!ht]
\caption{Graph classification. The best performer is highlighted in blue.}
\label{tab:cla}
\begin{center}
\scalebox{0.98}{\begin{tabular}{  c c c  c c  } 
  \toprule
  & GIN & R-GIN & GIN2 & R-GIN2 \\
  \midrule
MUTAG & $87.7\pm 8.80$ & \blue{$\bm{89.9}\pm 6.40$} & $86.7\pm 6.77$ & $89.3\pm 7.98$ \\ 
 \midrule
PTC & $64.6\pm 8.53$ & $64.8\pm 5.91$ & $64.8\pm 9.37$ & \blue{$\bm{66.3}\pm 9.63$} \\ 
 \bottomrule
\end{tabular}} 
\end{center}
\end{table}

\section{Proofs of theoretical results} \label{app:pro}
In this appendix, we discuss and prove a general result that implies \cref{thm:giv}. In addition, we also prove \cref{lem:faf}. We start with a computation of the Wasserstein distance.

Suppose $\sS = \{s_1,\ldots,s_m\}$ is a finite discrete set and $d$ is the discrete metric on $\sS$. For $\mu,\nu \in \mathcal{P}(\sS)$, let $(\mu(s_i))_{1\leq i\leq n}$ and $(\nu(s_i))_{1\leq i\leq n}$ be their respective probability weights. 

\begin{Lemma} \label{lem:wmn}
\begin{align*}
W(\mu,\nu)^2 = \frac{1}{2}\sum_{1\leq i\leq m} |\mu(s_i)-\nu(s_i)|.
\end{align*}
\end{Lemma}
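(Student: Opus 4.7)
The plan is to reduce the Wasserstein computation to a linear program over couplings and exploit the two key simplifications available for the discrete metric. First I would observe that since $d$ only takes values $0$ and $1$, we have $d^2 = d$, so
\begin{align*}
W(\mu,\nu)^2 = \inf_{\gamma \in \Gamma(\mu,\nu)} \sum_{1\leq i,j\leq m} d(s_i,s_j)\,\gamma(s_i,s_j) = 1 - \sup_{\gamma \in \Gamma(\mu,\nu)} \sum_{1\leq i\leq m} \gamma(s_i,s_i),
\end{align*}
where the last equality uses that the total mass of $\gamma$ is $1$. So the problem reduces to maximizing the diagonal mass of a coupling.

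Next, I would establish the tight bound $\sup_\gamma \sum_i \gamma(s_i,s_i) = \sum_i \min\bigl(\mu(s_i),\nu(s_i)\bigr)$. The upper bound is immediate from the marginal constraints, since $\gamma(s_i,s_i) \leq \sum_j \gamma(s_i,s_j) = \mu(s_i)$ and likewise $\gamma(s_i,s_i) \leq \nu(s_i)$. For the matching lower bound, I would exhibit a specific coupling that attains it: set $\gamma(s_i,s_i) = \min(\mu(s_i),\nu(s_i))$ on the diagonal, and distribute the residual masses $\mu'(s_i) = \mu(s_i) - \min(\mu(s_i),\nu(s_i))$ and $\nu'(s_j) = \nu(s_j) - \min(\mu(s_j),\nu(s_j))$ off-diagonal using any joint distribution with these residuals as marginals (for instance, the rescaled product $\mu' \otimes \nu' / \|\mu'\|_1$ when the residual is nonzero). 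A quick check that $\|\mu'\|_1 = \|\nu'\|_1$ ensures this construction yields a valid coupling in $\Gamma(\mu,\nu)$.

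Finally, I would convert the minimum-sum expression into the stated $\ell^1$ form via the elementary identity $\min(a,b) = \tfrac{1}{2}(a+b - |a-b|)$. Summing over $i$ and using $\sum_i \mu(s_i) = \sum_i \nu(s_i) = 1$ gives
\begin{align*}
\sum_{1\leq i\leq m} \min\bigl(\mu(s_i),\nu(s_i)\bigr) = 1 - \tfrac{1}{2}\sum_{1\leq i\leq m} |\mu(s_i)-\nu(s_i)|,
\end{align*}
which, combined with the first display, yields the claimed formula.

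The only step that requires real care is the explicit construction of the optimal coupling in the second paragraph; everything else is bookkeeping. I expect no serious obstacle, since the discreteness of $\sS$ makes the linear program finite-dimensional and the residual-mass coupling is a standard device. An alternative one-line justification is to invoke the Kantorovich--Rubinstein dual characterization, where for the discrete metric the Wasserstein distance coincides with one half of the total variation distance, but the direct primal construction above is self-contained and avoids importing machinery beyond \cref{defn:fmm}.
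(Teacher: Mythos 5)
Your proof is correct, and it follows the same primal strategy as the paper's: both arguments reduce the problem to maximizing the diagonal mass of a coupling (using $d^2=d$ for the discrete metric) and show that the optimum equals $\sum_i \min\bigl(\mu(s_i),\nu(s_i)\bigr)$. The difference is in how the optimal coupling is produced. The paper proves the existence of a coupling with $\gamma(s_i,s_i)=\min(\mu(s_i),\nu(s_i))$ by an induction on $m$, peeling off two coordinates at a time and recursing on the residual sequences; this is the longest part of its argument. You instead give a closed-form construction: diagonal mass $\min(\mu(s_i),\nu(s_i))$ plus the rescaled residual product $\mu'\otimes\nu'/\lVert\mu'\rVert_1$. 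This works, and the one point you left implicit is worth stating: for each $i$ at least one of $\mu'(s_i),\nu'(s_i)$ vanishes, so the residual product puts no mass on the diagonal and the diagonal of your coupling is exactly $\min(\mu(s_i),\nu(s_i))$ (though even without this observation the marginal upper bound already caps the supremum, so optimality is unaffected). Your route is shorter and avoids the induction entirely; the paper's inductive construction is more hands-on but buys nothing extra here. Your final conversion via $\min(a,b)=\tfrac12(a+b-|a-b|)$ is equivalent to the paper's device of summing the two one-sided inequalities obtained from the $\mu$- and $\nu$-marginals.
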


\begin{proof}
The result is known \cite{Vil09}. For completeness, we provide a self-contained elementary proof. Let $\gamma = \big(\gamma(s_i,s_j)\big)_{1\leq i,j\leq m}$ be in $\Gamma(\mu,\nu)$. We have
\begin{align} \label{eq:sum}
\begin{split}
    &\sum_{1\leq i\leq m}\sum_{1\leq j\leq m} \gamma(s_i,s_j)d(s_i,s_j)^2 \\
    &= \sum_{1\leq i\leq m}\sum_{1\leq j\neq i\leq m} \gamma(s_i,s_j) \\
    &= \sum_{1\leq i\leq m} \parens*{\sum_{1\leq j\leq m} \gamma(s_i,s_j) - \gamma(s_i,s_i)}  \\
    &= \sum_{1\leq i\leq m} \parens*{\mu(s_i) - \gamma(s_i,s_i)} \\
    &\geq \sum_{1\leq i\leq m} \parens*{\mu(s_i) - \min(\mu(s_i),\nu(s_i)}.
    \end{split}
\end{align}
As $W(\mu,\nu)^2$ is defined by taking the infimum of the left-hand side over all $\gamma \in \Gamma(\mu,\nu)$, we have $W(\mu,\nu)^2\geq \sum_{1\leq i\leq m} \big(\mu(s_i) - \min(\mu(s_i),\nu(s_i)\big)$. By the same argument, we also have $W(\mu,\nu)^2\geq \sum_{1\leq i\leq m} \big(\nu(s_i) - \min(\mu(s_i),\nu(s_i)\big)$. Summing up these two inequalities, we have 
\begin{align*}
2W(\mu,\nu)^2 & \geq \sum_{1\leq i\leq m} \big(\mu(s_i)+\nu(s_i) - 2\min(\mu(s_i),\nu(s_i)\big) \\
& = \sum_{1\leq i\leq m} |\mu(s_i)-\nu(s_i)|.
\end{align*}
Therefore, to prove the lemma, it suffices to show that there is a $\gamma$ such that $\gamma(s_i,s_i) = \min\big(\mu(s_i),\nu(s_i)\big)$. For this, we prove a slightly more general claim: if non-negative numbers $(x_i)_{1\leq i\leq m}$ and $(y_i)_{1\leq i\leq m}$ satisfy $\sum_{1\leq i\leq m}x_i = \sum_{1\leq j\leq m}y_i = a$, then there are non-negative $(z_{i,j})_{1\leq i,j\leq m}$ such that $\sum_{1\leq j\leq m}z_{i,j}=x_i, 1\leq i\leq m$, $\sum_{1\leq i\leq m}z_{i,j}=y_j, 1\leq j\leq m$, and $z_{i,i}=\min(x_i,y_i), 1\leq i\leq m$.

We prove this by induction on $m$. The case for $m=1$ is trivially true by taking $z_{1,1} = x_1 = y_1$. For $m\geq 2$, without loss of generality, we assume that $x_1\geq y_1$ and $x_2\leq y_2$. Then we choose $z_{1,1}=y_1$, $z_{2,2} = x_2$, $z_{1,j}=0, 1< j\leq m$ and $z_{i, 2} = 0, 1\leq i
\neq 2\leq m$. As a result, we form another two sequences of non-negative numbers $x_1-y_1,x_3,\ldots,x_m$ and $y_2-x_2,y_3,\ldots,y_m$ with both summing to $a - x_2 - y_1$. By the induction hypothesis, we are able to find non-negative $(z'_{i,j})_{1\leq i,j\leq m-1}$ for the two new sequences of length $m-1$ each. It suffices to let $z_{i,j}= z'_{i-1,j-1}$ for $i>1$ or $j>2$ and $z_{i,1}=z'_{i-1,1}$ for $i>1$ (illustrated in \cref{fig:cvariable}). This proves the claim and hence the lemma.
\begin{figure}[!htb]
\centering
\includegraphics[width=0.5\columnwidth, trim=0cm 0cm 0cm 0cm,clip]{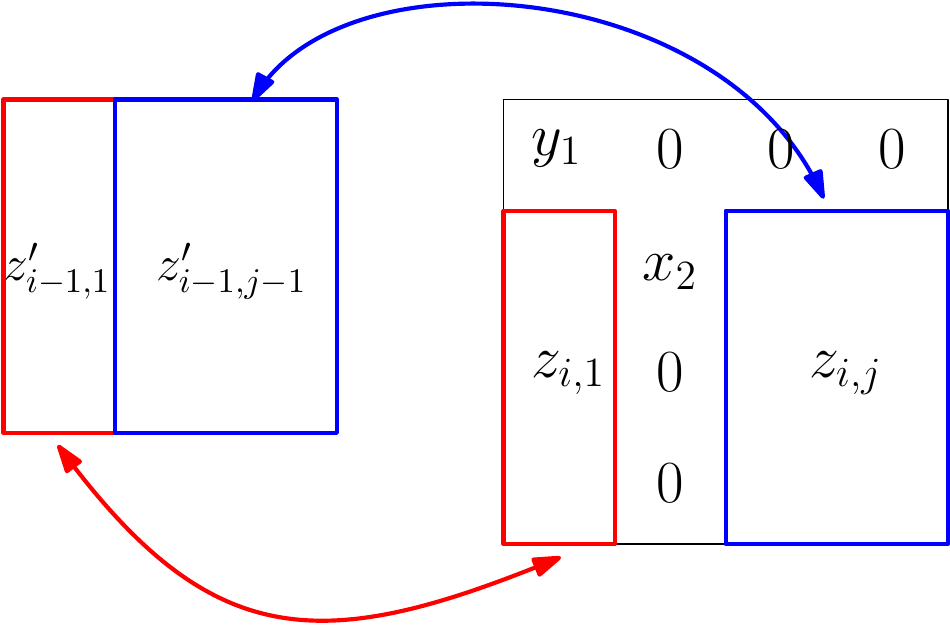}
\caption{The relations between $z_{i,j}$ and $z'_{i,j}$.} \label{fig:cvariable}
\end{figure} 
\end{proof}

To state and prove a general form of \cref{thm:giv}, we need to introduce a few more notions. We fix marginal distributions $\scN = \{\mu_i: 1\leq i\leq n\}$ with $\mu_i \in \mathcal{P}(\sS)$. For any pair of nodes $v_i$ and $v_j$ and $s\in \sS$, define
\begin{align} \label{eq:rho}
    \rho_{i,j}(s) = \mu_j(s)/\mu_i(s) \text{ if } \mu_j(s) \leq \mu_i(s) \text{ and } 1 \text{ otherwise.}
\end{align}
More generally, if $P = (v_{i_0},\ldots, v_{i_l})$ is a directed path on $\gG$ from $v_{i_0}$ to $v_{i_l}$, then 
\begin{align*}
    \rho_P(s) = \prod_{0\leq j<l} \rho_{i_j,i_{j+1}}(s).
\end{align*}
It is important to point out that $\rho_P$ can be computed directly as long as $\scN$ is given. 

In the graph $\gG$, suppose $\gH$ is a spanning tree and $v_0$ is a fixed (root) node. Let $\gE_{\gH}$ be the edge set of $\gH$ and $\gE' = \gE \backslash \gE_{\gH}$. For each edge $e = (v_i,v_j)$, let $P_i$ (resp. $P_j$) be the unique path on $\gH$ connecting $v_0$ and $v_i$ (resp. $v_j$). Moreover, $v_0$ is an endpoint of $P_i\cap P_j$, and let $v_k$ be the other endpoint of $P_i\cap P_j$. Denote by $Q_i$ (resp. $Q_j$) be the direct path (on $\gH$) from $v_k$ to $v_i$ (resp. $v_j$) (see \cref{fig:qiqj}). We introduce
\begin{align} \label{eq:tij}
    \mathfrak{t}_{i,j}(s) = \mu_i(s)+\mu_j(s) - 2\mu_k(s)\rho_{Q_i}(s)\rho_{Q_j}(s).
\end{align}

\begin{Definition}
Define 
\begin{align*}
\mathcal{T}_{G,\gH,v_0}(\scN) = \sum_{s\in \sS}\sum_{(v_i,v_j) \in \gE} \mathfrak{t}_{i,j}(s).
\end{align*}
\end{Definition}

\begin{figure}[!htb]
\centering
\includegraphics[width=0.5\columnwidth, trim=0cm 0cm 0cm 0cm,clip]{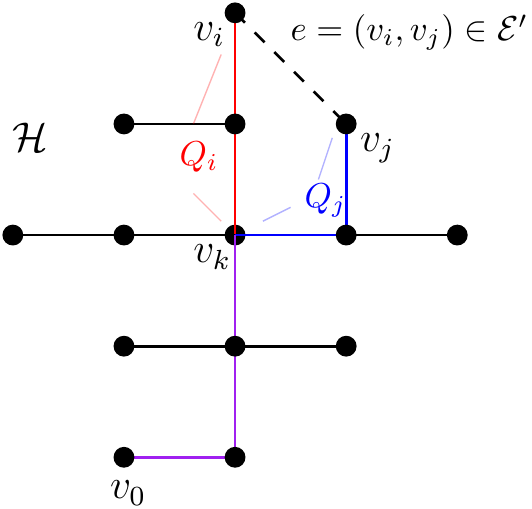}
\caption{An example of paths $Q_i$ and $Q_j$.} \label{fig:qiqj}
\end{figure} 

We can compute the special case where $\gG = \calH$ is a tree. Notice that for an edge $e=(v_i,v_j) \in \gE_{\calH}$ directed from $v_i$ to $v_j$, then $v_k=v_i$ and $Q_i=\{v_i\}$, $Q_j=e$ and $\rho_{Q_i}(s)=1$. If $\mu_i(s) \geq \mu_j(s)$, then $2\mu_k(s)\rho_{Q_i}(s)\rho_{Q_j}(s) = 2\mu_i(s)\cdot \mu_j(s)/\mu_i(s) = 2\mu_j(s)$. Hence, we have
\begin{align*}
\mu_i(s)&+\mu_j(s) - 2\mu_k(s)\rho_{Q_i}(s)\rho_{Q_j}(s) \\ & = \mu_i(s)-\mu_j(s) = |\mu_i(s)-\mu_j(s)|. 
\end{align*}
The case $\mu_i(s) < \mu_j(s)$ is similar, and in summary
\begin{equation} \label{eq:mmm}
\begin{split}
    \mathfrak{t}_{i,j}(s) & = \mu_i(s)+\mu_j(s) - 2\mu_k(s)\rho_{Q_i}(s)\rho_{Q_j}(s) \\ & = |\mu_i(s)-\mu_j(s)|
    \end{split}
\end{equation}
Therefore, for any $v_0$, we have
\begin{align} \label{eq:tgs}
\begin{split}
\mathcal{T}_{G,\gH,v_0}(\scN) &= \sum_{s\in \sS}\sum_{(v_i,v_j)\in \gE} |\mu_i(s)-\mu_j(s)| \\
& =\sum_{s\in \sS}\sum_{(v_i,v_j)\in \gE} |\bmu_s(i)-\bmu_s(j)| = \mathcal{T}_{G,1}(\scN),
\end{split}
\end{align}
where $\bmu_s(i)$ is same as $\mu_i(s)$ (cf.\ \cref{sec:tot}).

Following the notations in \cref{sec:tot}, we have the following generalization of \cref{thm:giv}.

\begin{Theorem} \label{thm:fas}
For any spanning tree $\gH$ of $\gG$ and root node $v_0$, we have
\begin{align*}
    \mathcal{T}_{G,2}(\scN) \leq \mathcal{T}_{G,1}(\scN) \leq 2\mathcal{T}_G(\scN) \leq \mathcal{T}_{G,\gH,v_0}(\scN).
\end{align*}
\end{Theorem}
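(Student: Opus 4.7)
The plan is to handle the three inequalities separately. The first two are essentially immediate given Lemma~\ref{lem:wmn}; the third, $2\mathcal{T}_G(\scN)\leq \mathcal{T}_{G,\gH,v_0}(\scN)$, is where the actual work lies and is where I expect the main difficulty.

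For $\mathcal{T}_{G,2}(\scN)\leq \mathcal{T}_{G,1}(\scN)$, note that each $\bmu_s(i)\in[0,1]$, so $|\bmu_s(i)-\bmu_s(j)|\in[0,1]$ as well, and on that interval $x^2\leq x$. Applying this pointwise and summing over $s\in\sS$ and $(v_i,v_j)\in\gE$ gives the bound. For $\mathcal{T}_{G,1}(\scN)\leq 2\mathcal{T}_G(\scN)$, fix any $\mu\in\Gamma(\scN)$ and let $\mu_{i,j}$ be its $(i,j)$-marginal, which is a coupling of $\mu_i$ and $\mu_j$. Because $d$ on $\sS$ is the discrete metric, $d(x,y)^2=\indicator{x\neq y}$, and the definition of the Wasserstein metric yields $\int d(x,y)^2\,\ud\mu_{i,j}(x,y)\geq W(\mu_i,\mu_j)^2$. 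Lemma~\ref{lem:wmn} then identifies the right-hand side with $\tfrac{1}{2}\sum_{s\in\sS}|\bmu_s(i)-\bmu_s(j)|$; summing over edges and taking the infimum over $\mu$ gives $\mathcal{T}_G(\scN)\geq \tfrac{1}{2}\mathcal{T}_{G,1}(\scN)$.

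For the third inequality, the strategy is to exhibit a single coupling $\mu^\star\in\Gamma(\scN)$ with $\mathcal{T}_G(\mu^\star)\leq \tfrac{1}{2}\mathcal{T}_{G,\gH,v_0}(\scN)$, which by definition of $\mathcal{T}_G(\scN)$ as an infimum suffices. I will build $\mu^\star$ by Markovian sampling along $\gH$ rooted at $v_0$: draw $\vx(v_0)\sim \mu_{v_0}$, and for each directed tree edge $(u,w)$ with $u$ the parent, take the conditional law of $\vx(w)$ given $\vx(u)$ to be that of the maximal coupling of $\mu_u,\mu_w$ (placing mass $\min(\mu_u(s),\mu_w(s))$ on the diagonal, whose existence follows from the construction used in the proof of Lemma~\ref{lem:wmn}). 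A short calculation from \cref{eq:rho} then gives $\P(\vx(w)=s\mid \vx(u)=s)=\rho_{u,w}(s)$, and the marginals of $\mu^\star$ agree with $\scN$ by construction.

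The final step is to bound $\P(\vx(i)\neq \vx(j))$ for every edge $(v_i,v_j)\in\gE$ (tree or not). With $v_k$ the LCA of $v_i,v_j$ in $\gH$ and $Q_i,Q_j$ the associated paths, the key move is to avoid computing the full joint law of $(\vx(i),\vx(j))$ and instead lower-bound $\P(\vx(i)=\vx(j)=s)$ by the probability of the restricted event that every vertex along $Q_i\cup Q_j$ takes value $s$. Since $Q_i\cap Q_j=\{v_k\}$ and the tree sampling is Markov with respect to the rooted structure, conditional on $\vx(v_k)=s$ the trajectories down $Q_i$ and down $Q_j$ are independent; chaining the conditional step probabilities yields $\P(\vx(i)=\vx(j)=s)\geq \mu_k(s)\rho_{Q_i}(s)\rho_{Q_j}(s)$. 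Summing over $s$ with $\sum_s(\mu_i(s)+\mu_j(s))=2$ and recognizing \cref{eq:tij} gives $\P(\vx(i)\neq \vx(j))\leq \tfrac{1}{2}\sum_s \mathfrak{t}_{i,j}(s)$; summation over edges completes the proof. The main obstacle is precisely this coordinate-wise lower bound for non-tree edges: a direct joint computation would require summing over all value assignments along intermediate vertices, which is intractable, whereas restricting to a single constant-$s$ trajectory is sharp enough to recover $\mathfrak{t}_{i,j}$ exactly (and becomes an equality on tree edges, as noted in \cref{eq:mmm}).
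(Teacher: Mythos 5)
Your proof is correct and follows essentially the same route as the paper's: the same $|x|^2\leq|x|$ bound for the first inequality, the same reduction of edge integrals to $W(\mu_i,\mu_j)^2$ via Lemma~\ref{lem:wmn} for the second, and the same rooted Bayesian-network coupling on $\gH$ with maximal couplings on tree edges, followed by the LCA conditional-independence lower bound $\P(\vx(i)=\vx(j)=s)\geq\mu_k(s)\rho_{Q_i}(s)\rho_{Q_j}(s)$, for the third. The only (harmless) differences are that you bypass the paper's compactness argument for attainment of the infimum in the second inequality, which is indeed unnecessary for that direction, and that you treat tree and non-tree edges uniformly rather than splitting the edge sum into $\gE_{\gH}$ and $\gE'$.
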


Before proving the result, we remark that although $\mathcal{T}_{G,\gH,v_0}(\scN)$ is defined in a convoluted way, it can however be computed directly given $\scN, \calH$ and $v_0$. Therefore, the result gives computable upper and lower bounds of $\mathcal{T}_G(\scN)$.

\begin{proof}
As $|\bmu_s(i)-\bmu_s(j)| \leq 1$, it is trivially true that $\mathcal{T}_{G,2}(\scN) \leq \mathcal{T}_{G,1}(\scN)$. 

To show $\mathcal{T}_{G,1}(\scN) \leq 2\mathcal{T}_G(\scN)$, we first claim that $\mathcal{T}_G(\scN) = \inf_{\mu\in \Gamma(\scN)} \mathcal{T}_G(\mu)$ is achieved for some $\mu_0 \in \Gamma(\scN)$. The map $\alpha: \Gamma(\scN) \to \mathbb{R}, \mu \mapsto \mathcal{T}_G(\mu)$ is continuous. On the other hand, $\Gamma(\scN)$ is a compact subset of a Euclidean space. This is because $\mathcal{P}(\sS^n)$ is a bounded subset of $\R^{m^n}$ with the components corresponding to weights of the joint distribution. Moreover, $\Gamma(\scN)$ is closed because the condition to have the prescribed marginal distributions is a set of linear conditions. By the extreme value theorem, $\inf \alpha$ is achieved for some $\mu_0 \in \Gamma(\scN)$. For each edge $(v_i,v_j) \in \gE$, let $\mu_{0,i,j}$ be the marginal distribution of the pair $(v_i,v_j)$. The marginals of $\mu_{0,i,j}$ are $\mu_i$ and $\mu_j$ at $v_i$ and $v_j$ respectively. We have
\begin{align*}
    \mathcal{T}_G(\scN) & = \mathcal{T}_G(\mu_0)
    = \int \sum_{(v_i,v_j)\in \gE} d(\vx(i),\vx(j))^2 \ud\mu_0(\vx) \\
    & = \sum_{(v_i,v_j) \in \gE} \int d(\vx(i),\vx(j))^2 \ud\mu_0(\vx) \\
    &= \sum_{(v_i,v_j) \in \gE} \int d(\vy(i),\vy(j))^2 \ud\mu_{0,i,j}(\vy) \\
    & \stackrel{\text{Def.}~\ref{defn:fmm}}{\geq} \sum_{(v_i,v_j)\in \gE} W(\mu_i,\mu_j)^2 \\
    & \stackrel{\text{Lem.}~\ref{lem:wmn}}{=}  \frac{1}{2}\sum_{(v_i,v_j) \in \gE} \sum_{s \in \sS} |\bmu_s(i)-\bmu_s(j)| = \frac{1}{2} \mathcal{T}_{G,1}(\scN).
\end{align*}

We now prove $2\mathcal{T}_G(\scN) \leq \mathcal{T}_{G,\gH,v_0}(\scN)$. Given a spanning tree $\calH$ and node $v_0$, we construct $\mu_{\calH,v_0} \in \Gamma(\scN)$ using ideas from the theory of Bayesian networks \cite{Bis06} as follows. We make $\calH$ directed by requiring that each edge is pointed away from the (root) node $v_0$. As a consequence, each node has at most $1$ incoming edge. Let $\vx=(\vx_i)_{1\leq i\leq n}$ be the random vector with $\vx_i$ the (random) label at $v_i$. For each directed edge $(v_i,v_j)$ in $\calH$, let $\mu_{i,j} \in \Gamma(\mu_i,\mu_j)$ be a distribution that realizes $W(\mu_i,\mu_j)$, which give a conditional probability weights
\begin{align*} 
\vp_{i,j} & = \{p_{i,j}(s,s'): s,s'\in \sS\} \\ & =\{p(\vx_j=s' \mid \vx_i=s) : s,s'\in \sS\}. 
\end{align*}
By \cite{Bis06} Section~8.1 (8.5), there is a distribution $\mu_{\calH,v_0} \in \Gamma(\scN)$ such that its marginal for each edge $(v_i,v_j) \in \gE_\calH$ is $\mu_{i,j}$. For each edge $(v_i,v_j) \in \gE'$, let $\mu_{i,j}'$ be the marginal of $\mu_{\calH,v_0}$ to the pair $(v_i,v_j)$. As $\mu_{i,j}$ realizes $W(\mu_i,\mu_j)$, by \cref{lem:wmn}, $\mu_{i,j}(s,s) = \min(\mu_i(s),\mu_j(s))$ for each $s \in \sS$. In particular, $p_{i,j}(s,s) = \rho_{i,j}(s)$ (cf.\ (\ref{eq:rho})). More generally, if $P$ is a directed path in $\calH$ from $v_i$ to $v_j$, then the following inequality holds
\begin{align} \label{eq:pij}
p_{i,j}(s,s)\geq \rho_{P}(s).
\end{align}

According to the definition, we have $\mathcal{T}_G(\scN) \leq \mathcal{T}_G(\mu_{\calH,v_0}) = \mathcal{S}_{\gE_{\calH}} + \mathcal{S}_{\gE'}$. The summand \begin{align*}\mathcal{S}_{\gE_{\calH}}= \sum_{(v_i,v_j) \in \gE_{\calH}} \int d(\vy(i),\vy(j))^2 \ud\mu_{i,j}(\vy)\end{align*} is the summation over the edges of $\gE_{\calH}$, while \begin{align*}\mathcal{S}_{\gE'} = \sum_{(v_i,v_j) \in \gE'} \int d(\vy(i),\vy(j))^2 \ud\mu_{i,j}'(\vy)\end{align*} is the summand over $\gE'$. For $\mathcal{S}_{\gE_{\calH}}$, we have seen that for each edge $(v_i,v_j) \in \gE_\calH$ 
\begin{align*}
&\int d(\vy(i),\vy(j))^2 \ud\mu_{i,j}(\vy) \\
\stackrel{\text{Lem.}~\ref{lem:wmn}}{=} &\frac{1}{2}\sum_{s\in \sS} |\mu_i(s)-\mu_j(s)|\\  \stackrel{(\ref{eq:mmm})}{=} &\frac{1}{2}\sum_{s\in \sS}\big(\mu_i(s)+\mu_j(s) - 2\mu_k(s)\rho_{Q_i}(s)\rho_{Q_j}(s)\big) ,
\end{align*} 
where the right-hand-side is the term $\frac{1}{2}\mathfrak{t}_{i,j}(s)$ (cf.\ (\ref{eq:tij})) that corresponds to $(v_i,v_j)$ in $\frac{1}{2} \mathcal{T}_{G,\gH,v_0}(\scN)$.

Consider $(v_i,v_j) \in \gE'$ and we first notice the following identity:
\begin{align*}
     2&\int d(\vy(i),\vy(j))^2 \ud\mu'_{i,j}(\vy) \\ 
    & \stackrel{(\ref{eq:sum})}{=} \sum_{s\in \sS} \mu_i(s)+\mu_j(s)-2\mu'_{i,j}(s,s).
\end{align*}
To show the summand is bounded by $\mathfrak{t}_{i,j}(s)$, it suffices to  show $\mu_{i,j}'(s,s)$ is bounded below by $\rho_{Q_i}(S)\rho_{Q_j}(s)\mu_k(s)$, with $v_k$ and paths $Q_i, Q_j$ be as in (\ref{eq:tij}). We estimate using the construction of $\mu_{i,j}'$ based on the method of Bayesian network (on directed $\calH$) as follows:
\begin{align*}
    \mu'_{i,j}(s,s) 
    & \geq  p(\vx_i=s, \vx_j=s, \vx_k=s) \\
    & = p(\vx_i=s, \vx_j=s\mid \vx_k=s)\mu_k(s) \\ 
    & = p_{k,i}(s,s)p_{k,j}(s,s)\mu_k(s) \\
    & \geq \rho_{Q_i}(s)\rho_{Q_j}(s)\mu_k(s).
\end{align*}
For the last equality, we use the fact that $Q_i\cap Q_j= \{v_k\}$; and hence $\vx_i, \vx_j$ are independent given $\vx_k$ by the Bayesian network construction. The last line follows from (\ref{eq:pij}). Consequently, the inequality that $2\mathcal{T}_G(\scN) \leq \mathcal{T}_{G,\gH,v_0}(\scN)$ follows.
\end{proof}
If we examine the formula of $\mathcal{T}_{G,\gH,v_0}(\scN)$ (by changing summation order), it can be decomposed into two parts $\sum_{(v_i,v_j) \in \gE_{\calH}}$ and $\sum_{(v_i,v_j) \in \gE'}$. The former is essentially $\mathcal{T}_{G,1}$ of $\scN$ on the tree $\calH$. This is the reason that \cref{thm:fas} implies \cref{thm:giv}. Formally we have the following.

\begin{proof}[Proof of \cref{thm:giv}.]
For $G=(V,E)$ of size $n$, recall $K_n$ is the complete graph on $V$ and  the complement $\overline{G}$ of $G$ is obtained from $K_n$ by removing edges in $E$. Let $\omega(G)$ be the clique number of $\overline{G}$, i.e., the number of vertices in a largest clique of $\overline{G}$. Let $c_1 = n - \omega(G)$, $c_3 = \sqrt{|\sS|n}$ and $c_2 = c_1c_3$. Notice that $c_1,c_2,c_3$ are independent of $\scN$. 

It suffices to prove the following inequalities
\begin{itemize}
    \item $\mathcal{T}_{G,1}(\scN) \leq 2\mathcal{T}^c_G(\scN)$; 
    \item $2\mathcal{T}^c_G(\scN) \leq c_1\mathcal{T}_{G,1}(\scN)$;  
    \item $\mathcal{T}_{G,1}(\scN) \leq c_3\mathcal{T}_{G,2}(\scN)^{\frac{1}{2}}$. 
\end{itemize}
Let $\sT = \{T_1,\ldots, T_k\}$ be any cover of $G$ by spanning trees. We have $\mathcal{T}_{G,1}(\scN) \leq \sum_{1\leq i\leq k}\mathcal{T}_{G,1}(T_i) = \sum_{1\leq i\leq k}2\mathcal{T}_{T_i}(\scN)$ by \cref{thm:fas}. As the inequality is true for any $\sT$, we have $\mathcal{T}_{G,1}(\scN) \leq 2\mathcal{T}^c_G(\scN)$. 

For the second inequality, by the theory of tree cover numbers of graphs \cite[Prop.\ 2]{Boz17}, there is a $\sT = \{T_1,\ldots, T_{c_1}\}$ that covers $G$. Therefore, by \cref{thm:fas}, $2\mathcal{T}^c_G(\scN) \leq \sum_{1\leq i\leq c_1}\mathcal{T}_{T_i}(\scN) \leq c_1\mathcal{T}_{G,1}(\scN)$. 

The last inequality follows from the Cauchy–Schwarz inequality. 
\end{proof}

We end this section by proving \cref{lem:faf}.  
\begin{proof}[Proof of \cref{lem:faf}.]
Using \cref{lem:wmn}, we estimate
\begin{align*}
   & 2\sum_{1\leq i\leq n}a_iW(\mu_{\mX,i},\mathcal{U}(\sS))^2 \\
    = & \sum_{1\leq i\leq n}a_i\sum_{1\leq j\leq m} \abs*{\mX_{i,j}-\frac{1}{|\sS|}} \\ \leq & \sum_{1\leq i\leq n}a_i\sum_{1\leq j\leq m} (\mX_{i,j}-\frac{1}{|\sS|})^2 \\
    = & \sum_{1\leq i\leq n}a_i\sum_{1\leq j\leq m}\mX_{i,j}^2 -2\sum_{1\leq i\leq n}a_i\sum_{1\leq j\leq m}\frac{\mX_{i,j}}{|\sS|} \\& + \sum_{1\leq i\leq n} a_i\sum_{1\leq j\leq m}\frac{1}{|\sS|^2} \\
    = & \Tr(\mX^\top \mD \mX) - 2\sum_{1\leq i\leq n} a_i\frac{1}{|\sS|} + \sum_{1\leq i\leq n}a_i\frac{1}{|\sS|}\\ 
    = &\Tr(\mX^\top \mD \mX) - \frac{1}{|\sS|}\Tr(\mD). 
\end{align*}
Therefore, $2\sum_{1\leq i\leq n}a_iW(\mu_{\mX,i},\mathcal{U}(\sS))^2 \leq \Tr(\mX^\top \mD \mX)+C$ with $C = -\frac{1}{|\sS|}\Tr(\mD)$, which is independent of $\mX$. 

The last statement of the lemma follows from the simple fact that for each $1\leq i\leq n$, we have
\begin{align*}
\sum_{1\leq j\leq m}\frac{1}{|\sS|^2} \leq \sum_{1\leq j\leq m}\mX_{i,j}^2 \leq 1.
\end{align*}
\end{proof}

\section{Additional plots for model analysis} \label{sec:add}

We supplement \cref{fig:rcora} by showing spectral plots of signals of probability weights $\phi(\mO)_{:,i}, 2\leq i\leq 7$ for the Cora dataset. The index $i$ corresponds to the $i$-th label class. From the plots, we observe that during the training of R-GCN, the high-frequency components indeed shrink for all the label classes. Compared with GCN, the last epoch of R-GCN has smaller high-frequency components for the $3$rd and $5$th label classes 

\begin{figure}[!htb]
\centering
\includegraphics[width=0.48\columnwidth, trim=0cm 0cm 0cm 0cm,clip]{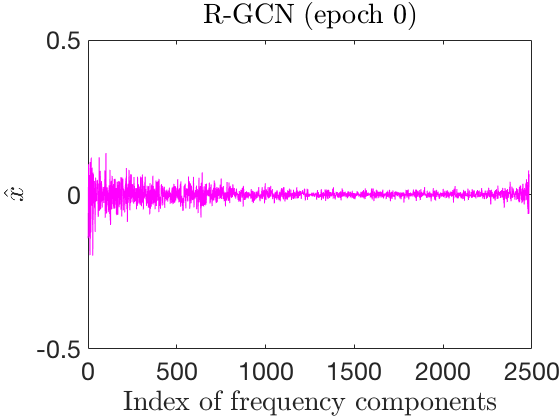}
\includegraphics[width=0.48\columnwidth, trim=0cm 0cm 0cm 0cm,clip]{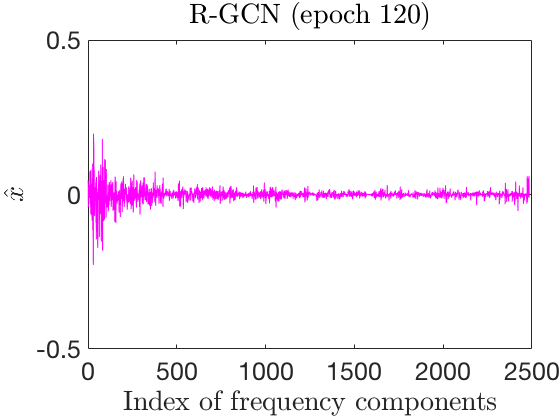}
\includegraphics[width=0.48\columnwidth, trim=0cm 0cm 0cm 0cm,clip]{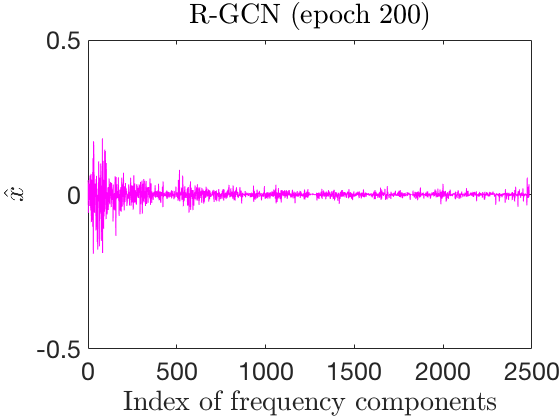}
\includegraphics[width=0.48\columnwidth, trim=0cm 0cm 0cm 0cm,clip]{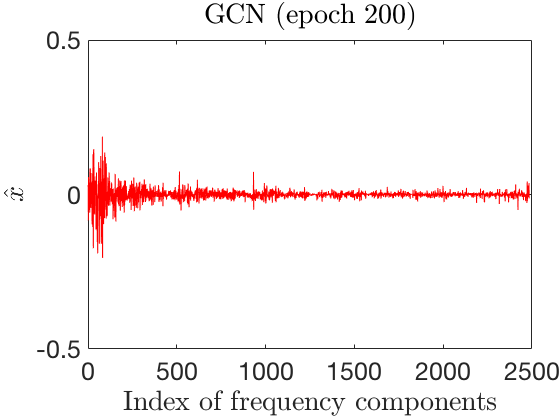}
\caption{Spectral plots of (normalized) signals of probability weights for the $2$nd label class.}
\end{figure} 

\begin{figure}[!htb]
\centering
\includegraphics[width=0.48\columnwidth, trim=0cm 0cm 0cm 0cm,clip]{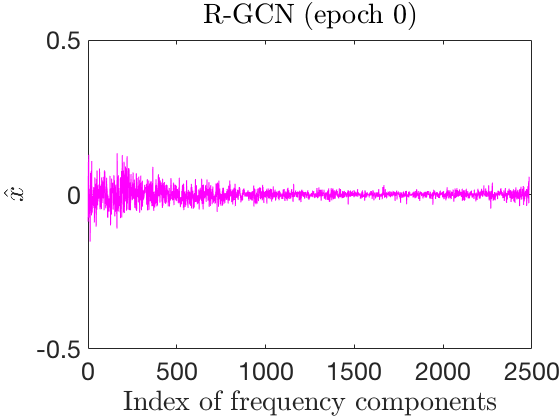}
\includegraphics[width=0.48\columnwidth, trim=0cm 0cm 0cm 0cm,clip]{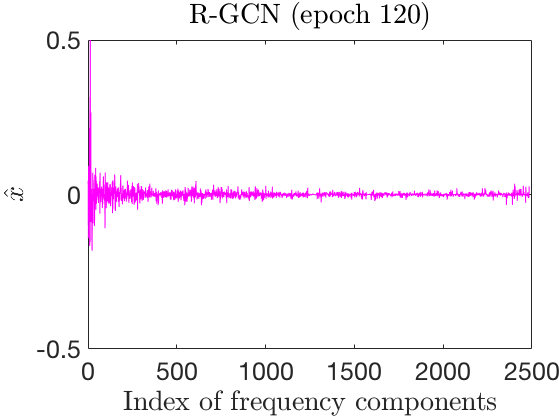}
\includegraphics[width=0.48\columnwidth, trim=0cm 0cm 0cm 0cm,clip]{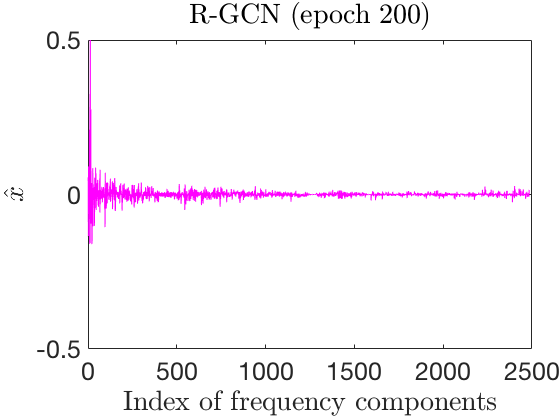}
\includegraphics[width=0.48\columnwidth, trim=0cm 0cm 0cm 0cm,clip]{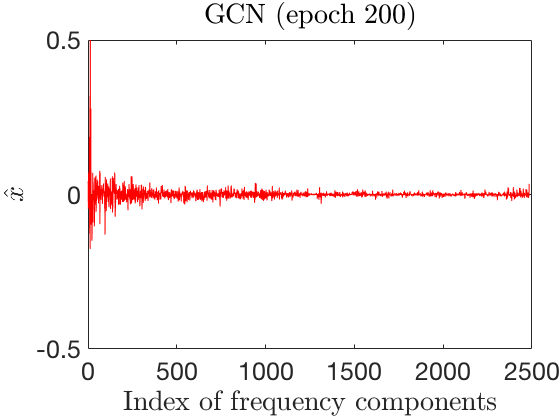}
\caption{Spectral plots of (normalized) signals of probability weights for the $3$rd label class.}
\end{figure} 

\begin{figure}[!htb]
\centering
\includegraphics[width=0.48\columnwidth, trim=0cm 0cm 0cm 0cm,clip]{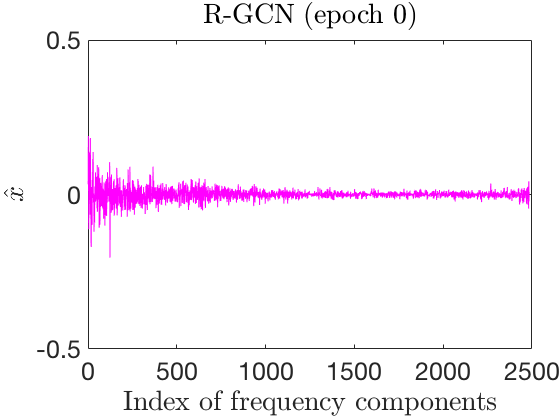}
\includegraphics[width=0.48\columnwidth, trim=0cm 0cm 0cm 0cm,clip]{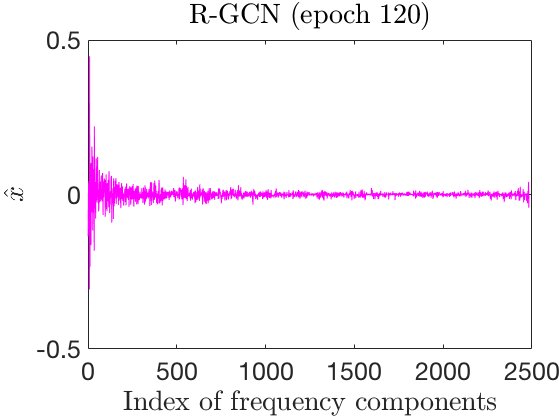}
\includegraphics[width=0.48\columnwidth, trim=0cm 0cm 0cm 0cm,clip]{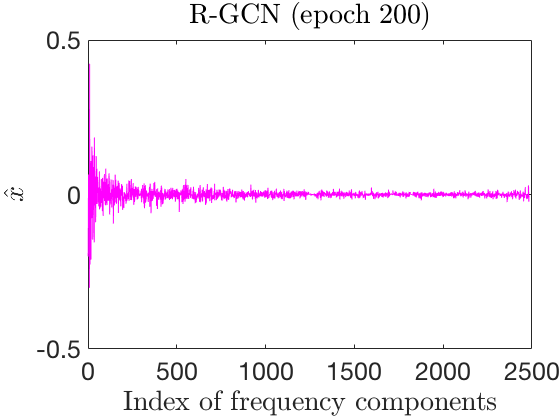}
\includegraphics[width=0.48\columnwidth, trim=0cm 0cm 0cm 0cm,clip]{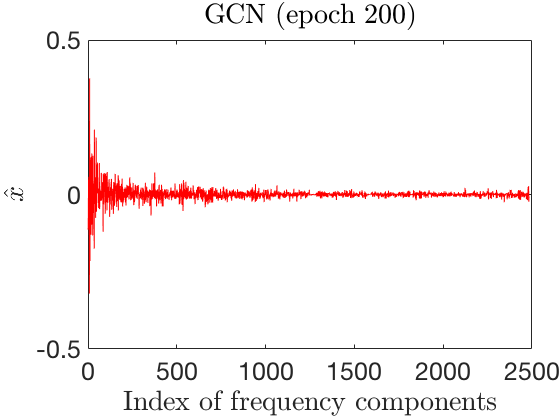}
\caption{Spectral plots of (normalized) signals of probability weights for the $4$th label class.}
\end{figure} 

\begin{figure}[!htb]
\centering
\includegraphics[width=0.48\columnwidth, trim=0cm 0cm 0cm 0cm,clip]{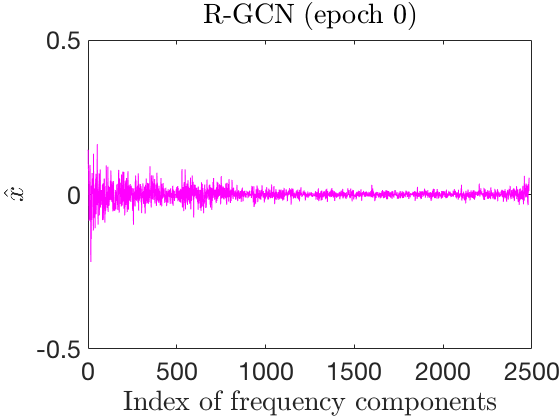}
\includegraphics[width=0.48\columnwidth, trim=0cm 0cm 0cm 0cm,clip]{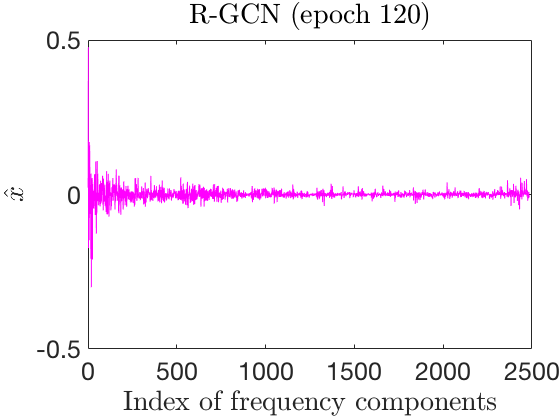}
\includegraphics[width=0.48\columnwidth, trim=0cm 0cm 0cm 0cm,clip]{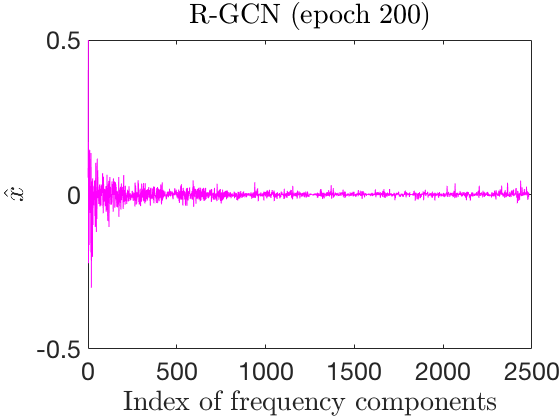}
\includegraphics[width=0.48\columnwidth, trim=0cm 0cm 0cm 0cm,clip]{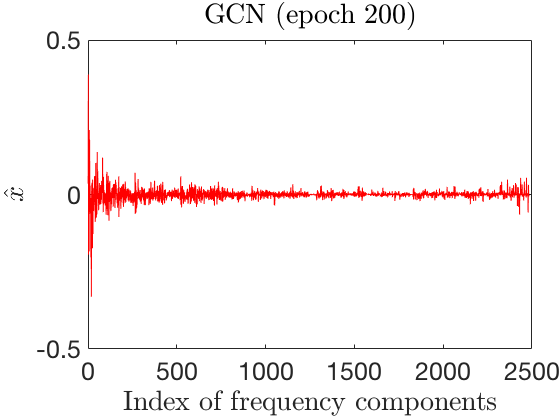}
\caption{Spectral plots of (normalized) signals of probability weights for the $5$th label class.}
\end{figure} 

\begin{figure}[!htb]
\centering
\includegraphics[width=0.48\columnwidth, trim=0cm 0cm 0cm 0cm,clip]{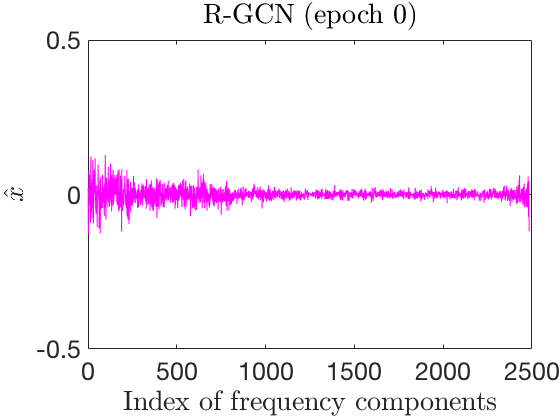}
\includegraphics[width=0.48\columnwidth, trim=0cm 0cm 0cm 0cm,clip]{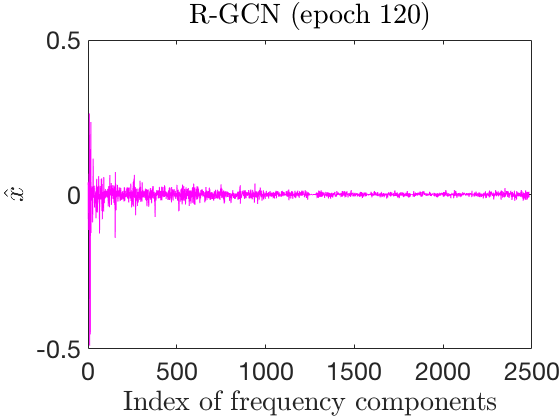}
\includegraphics[width=0.48\columnwidth, trim=0cm 0cm 0cm 0cm,clip]{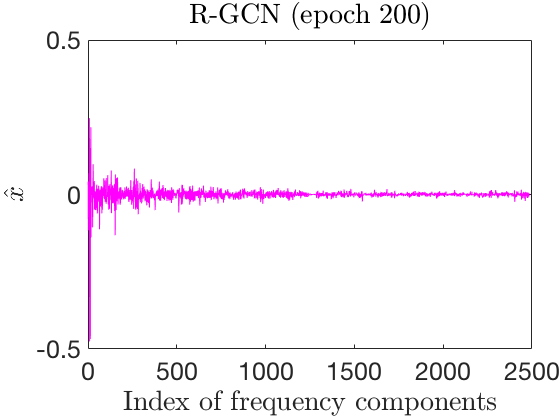}
\includegraphics[width=0.48\columnwidth, trim=0cm 0cm 0cm 0cm,clip]{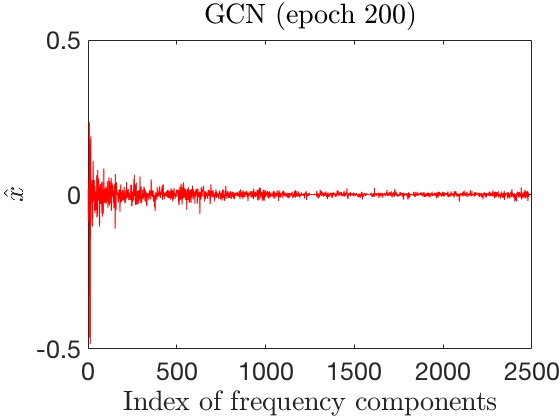}
\caption{Spectral plots of (normalized) signals of probability weights for the $6$th label class.}
\end{figure} 

\begin{figure}[!htb]
\centering
\includegraphics[width=0.48\columnwidth, trim=0cm 0cm 0cm 0cm,clip]{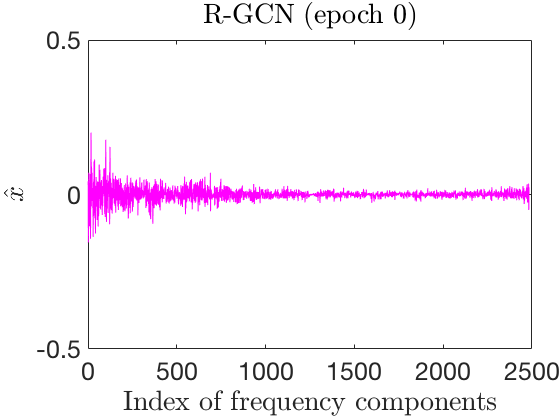}
\includegraphics[width=0.48\columnwidth, trim=0cm 0cm 0cm 0cm,clip]{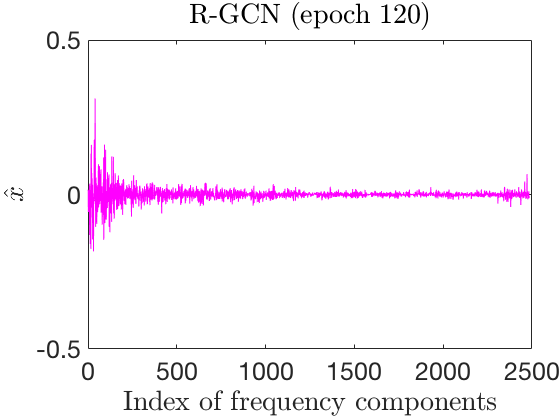}
\includegraphics[width=0.48\columnwidth, trim=0cm 0cm 0cm 0cm,clip]{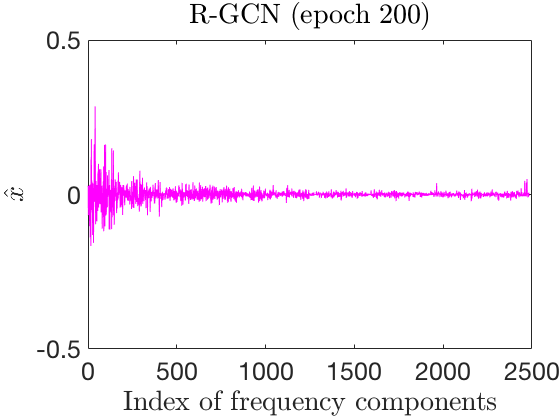}
\includegraphics[width=0.48\columnwidth, trim=0cm 0cm 0cm 0cm,clip]{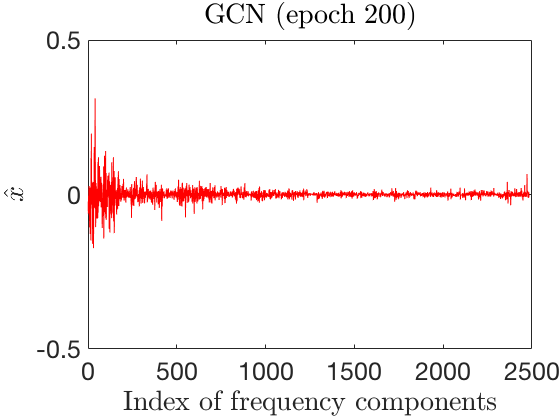}
\caption{Spectral plots of (normalized) signals of probability weights for the $7$th label class.}
\end{figure} 

\bibliographystyle{IEEEtran}
\bibliography{IEEEabrv,StringDefinitions,tpami_bib}

\end{document}